\DeclareMathAlphabet{\mathbbold}{U}{bbold}{m}{n}
\renewcommand{\backref}[1]{}
\renewcommand{\backrefalt}[4]{%
\ifcase #1 %
\or
[p.\ #2]%
\else
[pp.\ #2]%
\fi}
\newcommand{\para}{%
  \@startsection{paragraph}{4}%
  {\z@}{2ex \@plus 3.3ex \@minus .2ex}{-1em}%
 % {\normalfont\small\bfseries\uppercase}%
  {\normalfont\normalsize\bfseries}%
}
\newtheorem{theorem}{Theorem}
\newtheorem{lemma}[theorem]{Lemma}
\newtheorem{proposition}[theorem]{Proposition}
\theoremstyle{definition}
\newtheorem{definition}[theorem]{Definition}
\newtheorem{problem}[theorem]{Problem}
\newcommand{\R}{\mathbb{R}}
\newcommand{\E}{\mathop{{}\mathbb{E}}}
\newcommand{\dotp}[2]{\langle #1, #2 \rangle}
\newcommand{\abs}[1]{| #1 |}
\newcommand{\K}{\mathcal{K}}
\newcommand{\FO}{\mathcal{FO}}
\newcommand{\iprod}[2]{\left\langle #1, #2 \right \rangle}
\newcommand{\defeq}{\coloneqq}
\newcommand{\mV}{\mathcal{V}}
\newcommand{\mA}{\mathcal{A}}
\newcommand{\mF}{\mathcal{F}}
\newcommand{\F}{\mathcal{F}}
\newcommand{\wall}{\mathcal{W}}
\newcommand{\nem}{\mathcal{N}}
\newcommand{\eps}{\epsilon}
\newcommand{\mD}{\mathcal{D}}
\newcommand{\pmone}{\{-1,+1\}}
\newcommand{\spn}{\mathsf{span}}
\newcommand{\B}{\{0,1\}}
\newcommand{\PK}{\mathcal{P}_{\K}}
\newcommand{\tO}{\widetilde{O}}
\DeclareMathOperator{\tOmega}{\widetilde{\Omega}}
\newcommand{\floor}[1]{\left\lfloor{#1}\right\rfloor}
\newcommand{\norm}[1]{\|{#1}\|}
\newcommand{\Norm}[1]{\left\|{#1}\right\|}
\renewcommand{\(}{\left(}
\renewcommand{\)}{\right)}
\renewcommand{\>}{\rangle}
\DeclareRobustCommand{\noop}[1]{}
\DeclareMathOperator*{\argmin}{argmin}
\begin{document}

\title{No quantum speedup over gradient descent\\ for non-smooth convex optimization}

\author{
Ankit Garg\footnote{Microsoft Research India. Email: garga@microsoft.com.} 
\and
Robin Kothari\footnote{Microsoft Quantum and Microsoft Research, Redmond, WA, USA. \texttt{robin.kothari@microsoft.com}}
\and 
Praneeth Netrapalli\footnote{Microsoft Research India. Email: praneeth@microsoft.com}
\and
Suhail Sherif\footnote{Microsoft Research India and Tata Institute of Fundamental Research, Mumbai. Email: suhail.sherif@gmail.com.}
}
\date{}

\maketitle

%=============================================================================
\begin{abstract}
We study the first-order convex optimization problem, where we have black-box access to a (not necessarily smooth) function $f:\R^n \to \R$ and its (sub)gradient. Our goal is to find an $\eps$-approximate minimum of $f$ starting from a point that is distance at most $R$ from the true minimum. If $f$ is $G$-Lipschitz, then the classic gradient descent algorithm solves this problem with $O((GR/\eps)^{2})$ queries. Importantly, the number of queries is independent of the dimension $n$ and gradient descent is optimal in this regard: No deterministic or randomized algorithm can achieve better complexity that is still independent of the dimension $n$.

In this paper we reprove the randomized lower bound of $\Omega((GR/\eps)^{2})$ using a simpler argument than previous lower bounds. We then show that although the function family used in the lower bound is hard for randomized algorithms, it can be solved using $O(GR/\epsilon)$ quantum queries. We then show an improved lower bound against quantum algorithms using a different set of instances and establish our main result that in general even quantum algorithms need $\Omega((GR/\eps)^{2})$ queries to solve the problem. Hence there is no quantum speedup over gradient descent for black-box first-order convex optimization without further assumptions on the function family.
\end{abstract}

%=============================================================================
\section{Introduction}
\label{sec:intro}

The classic gradient descent algorithm, first proposed by Cauchy in 1847, is a popular algorithm for minimizing functions in high-dimensional spaces. For some problems, such as the case of convex function minimization that we consider in this paper, gradient descent provably converges to the function's global minimum. For other problems, such as finding good parameters of a deep neural network, gradient descent does not necessarily converge to a global minimum, and yet it has remarkable performance in practice.

Given the algorithm's popularity, it is interesting to ask if gradient descent can be sped up on a quantum computer. However, it's not obvious how to formalize this question since it's not clear what it means for a quantum algorithm to speed up a given classical algorithm. For example, the best known classical algorithm for integer factorization is the general number field sieve (GNFS). Does Shor's quantum algorithm for integer factorization speed up GNFS, or is it simply a different algorithm that solves the same problem?

One way to formalize the question \emph{Can quantum computers speed up gradient descent?} is to consider a computational problem that is provably solved by gradient descent, and for which gradient descent is optimal among all classical algorithms. We can then ask if quantum algorithms can solve this problem faster than gradient descent. The second condition, that gradient descent is optimal among classical algorithms, is required since otherwise quantum computers would trivially be able to outperform gradient descent by using the best classical algorithm.

Fortunately, there is a canonical optimization task that is solved optimally by gradient descent: convex optimization with black-box first-order oracles. A more thorough introduction to the theory of black-box convex optimization can be found in the textbooks by Nemirovsky and Yudin~\cite{nemirovsky1983problem}, Nesterov~\cite{Nes04,nesterov2018lectures}, and the monograph by Bubeck~\cite{Bub15}.

\subsection{First-order convex optimization}

Let's start with the unconstrained convex minimization problem for a convex function $f:\R^n \to \R$. Here we want to find an $x \in \R^n$ that's $\eps$-close to minimizing the function $f$. More precisely, if we let $x^* \defeq \argmin_{x\in\R^n} f(x)$, then our goal is to find any $x\in\R^n$ such that $f(x)-f(x^*)\leq \eps$.

To obtain algorithms that are very general, this problem is often studied in the setting of black-box optimization. Here we do not assume any particular structure of the function $f$ (e.g., that $f$ is a low-degree polynomial), and only assume that we have some efficient method of computing $f$ by an algorithm or circuit. In other words, we view $f$ as a black box.

If we only had access to a black-box computing $f$, this would be zeroth-order optimization. In first-order optimization, we additionally assume we can also compute the gradient of $f$, or more precisely, since the gradient may not exist, we assume we can compute some subgradient of $f$ (defined in \Cref{sec:prelim}). 
We call this oracle the \emph{first-order oracle} and denote it by $\FO(f)$. In this work we consider arbitrary convex functions that are not necessarily smooth,\footnote{In the optimization literature, a \emph{smooth function} is a function that is differentiable everywhere in its domain, so the gradient is well defined, and whose gradient has bounded Lipschitz constant.} and so we cannot assume that the gradient exists. Our goal is to solve the function minimization problem while minimizing the number of calls or queries to the black boxes for $f$ and some subgradient of $f$.

One might wonder why we consider queries to $f$ and the subgradient of $f$ to cost the same. This assumption is justified in many practical situations because of the \emph{cheap gradient principle}~\cite{GW08}, which says that ``the cost to evaluate the gradient $\nabla f$ is bounded above by a small constant times the cost to evaluate the function itself.'' 
This provably holds in many models of computation; E.g., for arithmetic circuits over $+$ and $\times$, it can be proved that the complexity of computing the gradient is at most 5 times the complexity of computing $f$~\cite{BS83}. 
The conversion of source code computing $f$ to code computing $\nabla f$ can often be done automatically in many programming languages, and such methods are called \emph{automatic differentiation} or \emph{algorithmic differentiation}~\cite{GW08}. 
These same principles essentially carry over to the computation of subgradients~\cite{KL18}. In the quantum setting, there is additional motivation to assume that a function and its gradient cost roughly the same since we can obtain the gradient (or a subgradient) of a function from a black-box computing the function, as shown in a sequence of papers that make increasingly weaker assumptions on the function oracle~\cite{Jor05,GAW19,CCLW20,vAGGdW20}.

Now that we have black-box access to $f$ and $\FO(f)$, we also need a starting point $x_0 \in \R^n$ to begin our search for a minimum. We require this to be an input, and the complexity will depend on  how close this is to $x^*$, since otherwise the interesting portion of the function where the minimum is achieved might be hiding in some small corner of $\R^n$ that we cannot efficiently locate with only black-box access. Since we can easily shift the function by a fixed vector, without loss of generality we assume $x_0 = \vec 0$ is the origin. 
Let the distance between $x_0 = \vec 0$ and $x^*$\footnote{If $x^*$ is not unique, we can let $R$ be the distance between $x_0$ and the closest $x^*$ to it.} be $R \defeq \norm{x^*}$.\footnote{%
Throughout this paper $\norm{\cdot}$ always denotes the standard $\ell_2$ norm in $\R^n$ defined as $\norm{z} \defeq \sqrt{\sum_i z_i^2}$.
}
For convenience, we will assume that $R$ is part of the input as well, although this can be relaxed by binary searching for the correct value of $R$.

Finally, it is also reasonable that the complexity of our algorithms depend on how quickly $f$ can change, since the value of $f$ at some point only constrains its values at nearby points if the function does not change too rapidly. Let $G$ be an upper bound on the Lipschitz constant of $f$ (defined in \Cref{sec:prelim}), and we assume this is part of the input as well.

We are now ready to formally define the first-order convex minimization problem in the black-box setting. We use $B(x,R)\defeq\{y:\norm{x-y}\leq R\}$ to denote an $\ell_2$-ball of radius $R$ around $x$.

\begin{problem}[First-order convex minimization]\label{prob:FOCM}
Let $f:\R^n \to \R$ have Lipschitz constant at most $G$ on $B(\vec{0},R)$, and let
\begin{equation}
  x^* \defeq \argmin_{x \in B(\vec{0},R)} f(x).
\end{equation}
Then given $n$, $G$, $R$, and $\eps>0$, the goal is to output a solution $x\in B(\vec{0},R)$ such that $f(x)-f(x^*)\leq \eps$ while minimizing the number of queries to $f$ and $\FO(f)$.\footnote{For simplicity, we assume that these oracles output real numbers to arbitrarily many bits of precision. Since the main results of this paper are lower bounds, this only makes our results stronger.} 
\end{problem}

Note that we allow algorithms to query the function and gradient oracles at any point in $\R^n$ even though the domain we are minimizing over is $B(\vec 0,R)$. 
This only makes our lower bounds stronger, and the algorithms discussed in this paper never query the oracles outside the domain.

As we discuss in \Cref{sec:prelim}, although the problem seems to involve 4 parameters, the parameters $G$, $R$, and $\eps$ are not independent since we can rescale the input and output spaces of $f$ and assume $G=1$ and $R=1$ without loss of generality. Thus any upper or lower bound on the complexity of this problem will be a function of $n$ and $GR/\eps$.

\subsection{Classical algorithms for first-order convex minimization}

Gradient descent, or in this case subgradient descent, is a simple algorithm that starts from a point $x_0$ and takes a small step (governed by a step size $\eta$) in the opposite direction of the subgradient returned at $x_0$. Intuitively this brings us closer to the minimum since we are stepping in the direction where $f$ decreases the most.

We can now describe the performance of subgradient descent for \Cref{prob:FOCM}. Since this is a constrained optimization problem, we use the projected subgradient descent algorithm, which is subgradient descent with the added step of projecting the current vector back onto the ball $B(\vec{0},R)$ after every step.

\begin{restatable}[Complexity of projected subgradient descent]{theorem}{PGD}
  \label{thm:PGD}
The projected subgradient descent algorithm solves \Cref{prob:FOCM} using $(GR/\eps)^2$ queries to $f$ and $\FO(f)$.
\end{restatable}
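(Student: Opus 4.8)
The plan is to prove the standard convergence bound for projected subgradient descent. Fix step size $\eta > 0$, set $x_0 = \vec 0$, and iterate $x_{k+1} = \Pi_{B(\vec 0, R)}(x_k - \eta g_k)$ where $g_k \in \partial f(x_k)$ is the subgradient returned by $\FO(f)$, running for $T$ steps and returning the best iterate (or the average). First I would record the key one-step inequality: since projection onto the convex set $B(\vec 0, R)$ is a contraction and $x^* \in B(\vec 0, R)$, we have $\norm{x_{k+1} - x^*}^2 \le \norm{x_k - \eta g_k - x^*}^2 = \norm{x_k - x^*}^2 - 2\eta \iprod{g_k}{x_k - x^*} + \eta^2 \norm{g_k}^2$. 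By convexity, $\iprod{g_k}{x_k - x^*} \ge f(x_k) - f(x^*)$, and by the $G$-Lipschitz assumption, $\norm{g_k} \le G$. Rearranging gives $f(x_k) - f(x^*) \le \frac{1}{2\eta}\left(\norm{x_k - x^*}^2 - \norm{x_{k+1} - x^*}^2\right) + \frac{\eta G^2}{2}$.

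Next I would sum this inequality over $k = 0, \dots, T-1$. The distance terms telescope, leaving $\sum_{k=0}^{T-1} \left(f(x_k) - f(x^*)\right) \le \frac{\norm{x_0 - x^*}^2}{2\eta} + \frac{T \eta G^2}{2} \le \frac{R^2}{2\eta} + \frac{T\eta G^2}{2}$, using $\norm{x_0 - x^*} = R$. Hence the minimum (or average) over the iterates satisfies $\min_k \left(f(x_k) - f(x^*)\right) \le \frac{R^2}{2\eta T} + \frac{\eta G^2}{2}$. Optimizing the right-hand side over $\eta$ by setting $\eta = \frac{R}{G\sqrt{T}}$ yields a bound of $\frac{GR}{\sqrt{T}}$.

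Finally I would set this equal to $\eps$: choosing $T = (GR/\eps)^2$ guarantees that some queried iterate $x_k$ achieves $f(x_k) - f(x^*) \le \eps$, and each iteration makes one query to $f$ (to track the best point) and one to $\FO(f)$, giving the claimed query count. I should also note that all iterates lie in $B(\vec 0, R)$ by construction of the projection step, so the output is feasible, and that the Lipschitz bound $\norm{g_k} \le G$ on subgradients holds throughout $B(\vec 0, R)$, which is where all queries are made.

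There is no real obstacle here — this is the textbook analysis — but the one point that deserves care is the handling of the projection: one must invoke that $\Pi_{B(\vec 0,R)}$ is nonexpansive and that $x^*$ is in the set, so that projecting does not increase the distance to $x^*$. A second minor point is that the theorem as stated writes $(GR/\eps)^2$ without a constant or ceiling; since the oracle costs and the choice of $\eta$ are flexible, I would either absorb constants implicitly or remark that one takes $T = \lceil (GR/\eps)^2 \rceil$, which does not affect the stated asymptotic claim.
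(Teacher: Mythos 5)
Your proposal is correct and follows essentially the same argument as the paper: the one-step inequality from nonexpansiveness of the projection, the subgradient inequality, the Lipschitz bound on $\norm{g_k}$, telescoping, and choosing $\eta$ and $T$ so the bound is $\eps$ (the paper normalizes $G=R=1$, takes $\eta=\eps$, and outputs the averaged iterate rather than the best one, but this is the same textbook analysis).
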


To be self contained, we prove this in \Cref{sec:prelim}. Observe that the query complexity of this algorithm, the number of queries made by the algorithm, is independent of $n$.\footnote{Of course, the time complexity of implementing this algorithm will be at least linear in $n$ since each query to either oracle requires us to manipulate a vector of length $n$.} 
This is quite surprising at first and partly explains why gradient descent and its variants are popular in high-dimensional applications. More generally, we call such algorithms \emph{dimension-independent algorithms}.

There also exist dimension-dependent algorithms for \Cref{prob:FOCM} that work well when $n$ is small. For example, the center of gravity method~\cite{Bub15} solves this problem with $O(n \log(GR/\eps))$ queries, which is very reasonable when $n$ is small (and the algorithm is very efficient in terms of $\eps$). In this work we focus on dimension-independent algorithms and assume that $n$ is polynomially larger than the other parameters in the problem. 

When $n$ is large, we cannot improve over projected subgradient descent (\Cref{thm:PGD}) using any deterministic or randomized algorithm.  We reprove the (well known) optimality of this algorithm among deterministic and randomized algorithms. This result is presented in \Cref{sec:randomizedlb}.  

\begin{restatable}[Randomized lower bound]{theorem}{RLB}
  \label{thm:randomizedlb}
  For any $G$, $R$, and $\eps$, there exists a family of convex functions $f:\R^n \to \R$ with $n = O((GR/\eps)^{2})$, 
  with Lipschitz constant at most $G$ on $B(\vec{0},R)$, such that any classical (deterministic or bounded-error randomized) algorithm that solves \Cref{prob:FOCM} on this function family must make $\Omega((GR/\eps)^2)$ queries to $f$ or $\FO(f)$ in the worst case.
\end{restatable}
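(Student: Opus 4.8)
The plan is to exhibit a hard family of convex functions built from the ``maximum of random linear functions'' construction, and then show that any randomized algorithm making few queries cannot distinguish the typical instance from one where the minimum is hidden deep inside a combinatorially long chain. Concretely, I would consider functions of the form $f(x) = \max_{i \in [T]} \left( \dotp{v_i}{x} - c_i \right)$ (up to a small regularizer $\gamma \norm{x}$ to keep the minimum inside the ball), where $T = \Theta((GR/\eps)^2)$, the $v_i$ are random unit vectors (in a space of dimension $n$ polynomially large, so they are near-orthogonal), and the offsets $c_i$ are chosen along a ``staircase'' so that the unique minimizer requires correctly identifying a path through all $T$ pieces. This is the Nemirovski--Yudin style wall construction, and the key structural claim is that knowing which pieces are ``active'' after $k$ queries gives essentially no information about the $(k{+}1)$-st relevant direction $v_{k+1}$, because the random vectors are nearly orthogonal and the first-order oracle at a queried point $x$ reveals only the single index achieving the max (plus its gradient $v_i$), which with high probability is one of the already-discovered directions.

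The steps, in order, would be: (1) Fix the construction and verify the easy bookkeeping --- that $f$ is convex, that its Lipschitz constant on $B(\vec 0, R)$ is at most $G$ (this controls $\norm{v_i}$ and $\gamma$), and that the minimum value is achieved at a point at distance $\le R$ with $f(x^*) - \min_i(\dotp{v_i}{0} - c_i)$ scaling like $\eps \cdot T = \Theta((GR)^2/\eps)$, so that reaching an $\eps$-approximate minimizer forces the algorithm to ``unlock'' all $T$ levels. (2) Define the notion of an algorithm's knowledge after $t$ queries as the set $S_t \subseteq [T]$ of indices it has ``seen,'' and prove the progress lemma: conditioned on the history, the next query point $x$ is a fixed vector, and $\Pr[\text{the argmax at } x \text{ lies outside } S_t] $ is small --- roughly $|S_t^c| \cdot T \cdot \exp(-\Omega(n / T^2))$ or a comparable bound from anti-concentration of $\dotp{v_i}{x}$ for a fresh random $v_i$, using $n \gg T^2$. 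Hence in $o(T)$ queries the algorithm learns $o(T)$ levels with high probability. (3) Show that without knowing all $T$ levels, any output $x \in B(\vec 0, R)$ has $f(x) - f(x^*) > \eps$, because an unknown level $i$ contributes a term $\dotp{v_i}{x} - c_i$ that, for $x$ in the span of the discovered directions, cannot be pushed below the required threshold (again by near-orthogonality, $\dotp{v_i}{x}$ is tiny for undiscovered $v_i$). (4) Conclude by Yao's principle: a distribution on inputs on which every deterministic algorithm errs with constant probability after $o((GR/\eps)^2)$ queries gives the randomized lower bound, and note $n = O((GR/\eps)^2)$ suffices (one can take $n$ exactly of that order, perhaps with a polylog factor, or absorb it into the $\tilde\Omega$/$\Omega$ as the theorem statement permits).

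I expect the main obstacle to be step (2), the progress lemma, specifically handling adaptivity correctly: the query point $x$ depends on all previous oracle answers, which depend on the random vectors, so I cannot naively say $v_{k+1}$ is independent of $x$. The standard fix is a principle-of-deferred-decisions argument --- reveal $v_i$ only when the algorithm first ``touches'' level $i$, and argue that as long as the algorithm has queried only points in (a small neighborhood of) $\spn\{v_j : j \in S_t\}$, the conditional distribution of the remaining $v_i$ is still (close to) uniform on the orthogonal complement, so anti-concentration applies afresh. Making ``close to'' quantitative --- bounding the total-variation drift accumulated over $o(T)$ queries so it stays below a constant --- is the delicate calculation, and it is what forces $n$ to be at least polynomially (here, quadratically up to logs) larger than $GR/\eps$. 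A secondary, more minor subtlety is choosing the staircase offsets $c_i$ so that the function genuinely needs \emph{all} levels rather than, say, a constant fraction of them, while simultaneously keeping the minimizer inside $B(\vec 0, R)$ and the gap exactly $\Theta(\eps)$ per level; this is a routine but careful balancing of the three parameters $G$, $R$, $\eps$ that the scaling reduction ($G = R = 1$) mentioned earlier in the paper makes cleaner.
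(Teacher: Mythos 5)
There is a genuine gap, and it concerns the dimension bound that the theorem explicitly claims. Your construction (random near-orthonormal directions $v_i$ plus a staircase of offsets, i.e.\ the Nemirovsky--Yudin style family) intrinsically needs the dimension to be polynomially larger than the number of levels $T$: your own progress lemma in step (2) invokes anti-concentration of $\dotp{v_i}{x}$ for a fresh random unit vector, which gives fluctuations of order $\sqrt{\log n/n}$, and for the argmax to stay among discovered indices the staircase gap $\gamma$ must exceed this scale, while for an $\eps$-approximate minimizer to be forced to overlap \emph{all} $T$ directions you need $T\gamma \lesssim \eps = \Theta(1/\sqrt{T})$ (with $G=R=1$). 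Combining these forces $n = \tOmega(T^3) = \tOmega((GR/\eps)^6)$ for the plain construction (and still $\tOmega((GR/\eps)^4)$ even with the wall-function trick), and indeed your step (2) itself posits $n \gg T^2 = (GR/\eps)^4$. This contradicts your step (4) assertion that $n = O((GR/\eps)^2)$ "suffices" --- the theorem statement demands exactly that dimension, so the proposal as written proves at best a weaker statement in much higher dimension. A related calibration error appears in step (1): a $G$-Lipschitz function on $B(\vec 0,R)$ varies by at most $O(GR)$ on the ball, so you cannot have $T = \Theta((GR/\eps)^2)$ levels each contributing a gap of $\Theta(\eps)$ with total depth $\Theta((GR)^2/\eps)$; the correct calibration (as in the paper's quantum section, which uses precisely this family) is that $\eps$-optimality forces overlap $\Theta(1/\sqrt{T})$ with every $v_i$, and the staircase offsets must be far smaller than $\eps$, which is exactly what drives the dimension up.

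The paper's own proof of this theorem avoids random directions entirely and thereby achieves the optimal $n = \Theta((GR/\eps)^2)$: it uses the coordinate-aligned family $f_z(x) = \max_{i\in[n]} z_i x_i$ for $z \in \pmone^n$ with $n = \lfloor 0.9/\eps^2 \rfloor$, fixes a tie-breaking rule for the subgradient oracle, and shows that each (adaptive, randomized) query fixes in expectation at most $2$ new coordinates of $z$, while any $\eps$-approximate minimizer determines all $n$ coordinates of $z$; the $\Omega(n)$ bound follows by linearity of expectation, with no deferred-decision or total-variation-drift analysis needed because the conditional distribution of the unrevealed signs stays exactly uniform. Ironically, the family you propose is the one the paper reserves for the \emph{quantum} lower bound, precisely because the $\max_i z_i x_i$ family admits a quadratic quantum speedup --- so your route is viable for proving a (dimension-inflated) randomized lower bound, but it neither meets the stated dimension bound nor matches the simplicity the theorem's proof is designed to showcase.
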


This lower bound on query complexity has been shown in several prior works~\cite{nemirovsky1983problem,woodworth2017lower,BJLLS19}, but we believe our proof is simpler and the dimension $n$ required in our proof seems to be smaller than that in prior works. Note that while several expositions of gradient descent prove the lower bound for deterministic algorithms, very few sources establish a lower bound against randomized algorithms.

Our lower bound uses the following hard family of functions: For any $z \in \{-1,+1\}^n$, let $f_z(x_1,\ldots,x_n) = \max_{i \in [n]} z_i x_i$,\footnote{We use $[n]$ to denote the set of positive integers less than or equal to $n$, i.e., $[n] \defeq \{1,\ldots,n\}$.} 
 where $n=O(1/\eps^2)$.
These functions are convex with Lipschitz constant $1$. 
We show that finding an $\eps$-approximate minimum within $B(\vec 0 ,1)$ requires $\Omega(n)$ queries to the oracles. We establish the lower bound by showing that with high probability, every query of a randomized algorithm only reveals $O(1)$ bits of information about the string $z$, but an $\eps$-approximate solution to this problem allows us to reconstruct the string $z$, which has $n$ bits of information.

\subsection{Quantum algorithms for first-order convex minimization}

We then turn to quantum algorithms for solving \Cref{prob:FOCM}. At first, it might seem that since gradient descent is a sequential, adaptive algorithm where each step depends on the previous one, there is little hope of quantum algorithms outperforming gradient descent. 

On the other hand, consider the hard family of functions described above that witnesses the classical randomized lower bound in \Cref{thm:randomizedlb}. While this is hard for classical algorithms, we show in \Cref{sec:qalg} that there is a quantum algorithm that solves the problem on this family obtaining a quadratic speedup over any classical algorithm (and in particular, over gradient descent).

\begin{restatable}[Quantum algorithm for classically hard function family]{theorem}{QUB} 
  \label{thm:quantumub}
  There is a quantum algorithm that solves \Cref{prob:FOCM} on the class of functions that appear in the classical lower bound of \Cref{thm:randomizedlb} using $O(GR/\eps)$ queries to the oracle for $f$.
\end{restatable}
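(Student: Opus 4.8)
The plan is to reduce the problem on this family to the task of \emph{learning the hidden sign vector} $z\in\{-1,+1\}^n$, and then to learn $z$ with a quantum subroutine making only $O(\sqrt n)=O(GR/\eps)$ queries. For the reduction, normalize $G=R=1$ as in \Cref{sec:prelim} and first identify the minimizer: for any $x$ with $\norm{x}\le 1$ one has $\max_i z_i x_i\ge\tfrac1n\sum_i z_i x_i\ge-\tfrac1n\norm{z}\norm{x}\ge-1/\sqrt n$, with equality exactly at $x^\star=-z/\sqrt n$, so $f_z(x^\star)=-1/\sqrt n$. In the hard family the dimension satisfies $n<1/\eps^2$ (otherwise $x=\vec 0$ is already an $\eps$-approximate minimizer and \Cref{thm:randomizedlb} would be vacuous), so the target value $f_z(x^\star)+\eps=\eps-1/\sqrt n$ is strictly negative; hence any feasible $x$ with $f_z(x)\le f_z(x^\star)+\eps$ must have $z_i x_i<0$, i.e.\ $\operatorname{sign}(x_i)=-z_i$, for \emph{every} coordinate $i$, while $x^\star$ itself is such a point. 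Thus solving \Cref{prob:FOCM} on $f_z$ is equivalent to recovering $z$ from oracle access, after which we output $-z/\sqrt n$.

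Next I would record what the oracle makes cheap. Querying $f$ at $\mathbf 1_T$ reveals whether $T$ contains an index with $z_i=1$, i.e.\ it is a Grover-style ``$\mathrm{OR}$'' oracle for the set $S=\{i:z_i=1\}$ (and symmetrically for $\bar S$ after negating the query); querying at a point with distinct positive weights reveals, through its value or through the one-hot subgradient $e_j$ (which is forced whenever the maximizer is unique), the top coordinate of $S$ in any ordering we choose; and querying at $-\hat z/\sqrt n$ returns $-1/\sqrt n$ if and only if $\hat z=z$, so a complete guess is \emph{verified in a single query}. Peeling coordinates off one at a time with these primitives already solves the problem classically in $O(n)=O((GR/\eps)^2)$ queries, matching \Cref{thm:randomizedlb}; the quantum algorithm has to carry out this peeling in superposition.

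Concretely, I would run amplitude amplification / Grover search against the ``$\mathrm{OR}$''-type oracles above to localize the coordinates of $z$ in batches rather than one at a time, interleave the single-query verification as a safeguard, and argue that $O(\sqrt n)$ queries determine all of $z$ with constant success probability. \textbf{The main obstacle} is precisely this query-count bound. Because the marked set $S$ is dense (about $n/2$ elements on the hard instances), a plain Grover enumeration of $S$ is no faster than the classical peeling, and a blind Grover search for $z$ over the $2^n$-point hypercube costs $2^{\Theta(n)}$; so the speedup must be wrung out of the extra structure the first-order oracle exposes — any incorrect guess $\hat z$ points, via the value or subgradient at a suitably weighted query point, at a \emph{specific} wrong coordinate. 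The crux is to organize the quantum search so that a constant fraction of the still-unknown coordinates is resolved using $O(\sqrt k)$ queries when $k$ coordinates remain, which telescopes to $O(\sqrt n)$ in total; and to keep this robust to an adversarial choice of subgradient at ties, one should rely on query points with a unique maximizer (forcing $\nabla f_z=z_j e_j$) or on the value oracle alone.
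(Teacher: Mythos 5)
Your reduction is sound and matches the paper's: normalizing $G=R=1$, any $\eps$-approximate minimizer of $f_z$ has all coordinates of the opposite sign to $z$ (this is \Cref{lem:recoverz}), so the task is exactly to learn $z$ and output $-z/\sqrt{n}$; and you correctly observe that querying the value oracle at $\frac{1}{\sqrt n}\sum_{i\in S}e_i$ simulates an $\mathsf{OR}$ query to the hidden string, i.e.\ it reveals whether $S$ contains an index with $z_i=+1$. However, the proposal has a genuine gap exactly where you flag it: the claim that $O(\sqrt n)$ quantum queries suffice to learn $z$ from these $\mathsf{OR}$-type queries is the entire content of the theorem, and your proposed route does not deliver it. Grover search and amplitude amplification give no advantage here: the marked set has $\approx n/2$ elements, so enumerating it by search costs $\Theta(n)$ queries, and your hoped-for ``crux'' --- resolving a constant fraction of the $k$ unknown coordinates with $O(\sqrt k)$ queries so that the costs telescope --- is not something that follows from amplitude amplification or from the one-wrong-coordinate feedback of the subgradient oracle; you state it as the obstacle but give no argument for it.

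What the paper uses instead is Belovs' quantum algorithm for combinatorial group testing~\cite{Bel14}: learning an arbitrary $n$-bit string from $\mathsf{OR}$ queries can be done with $O(\sqrt n)$ quantum queries, a result obtained by exhibiting a feasible solution to the dual of the negative-weights adversary SDP, not by any Grover-style iteration (the paper explicitly notes the speedup does not come from Grover or amplitude amplification). Once this subroutine is invoked, the theorem follows immediately from your (correct) reduction, since $O(\sqrt n)=O(1/\eps)=O(GR/\eps)$. So to repair the proof you should replace the Grover-based peeling plan with an appeal to (or a reproof of) Belovs' group-testing algorithm; without such an ingredient the query bound is unsubstantiated, and the classical $\Omega(n)$ lower bound on learning from single-index feedback suggests the peeling strategy as described cannot reach $O(\sqrt n)$.
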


Notably, unlike most quadratic speedups in quantum computing, the source of this quadratic speedup is not Grover's algorithm or amplitude amplification. \Cref{thm:quantumub} uses Belovs' quantum algorithm for learning symmetric juntas, which is constructed by exhibiting a feasible solution to the dual semidefinite program of the negative-weights adversary bound~\cite{Bel14}.

Now that we have shown a quadratic quantum speedup on a family of instances known to be hard for classical algorithms, there is some hope that quantum algorithms may provide some speedup for the general first-order convex minimization problem. Alas, our next result (established in \Cref{sec:quantumlb}), which is our main result, shows that this is not the case, and quantum algorithms cannot in general yield a speedup over classical algorithms for first-order convex minimization.

\begin{restatable}[Quantum lower bound]{theorem}{QLB}
  \label{thm:quantumlb}
  For any $G$, $R$, and $\eps$, there exists a family of convex functions $f:\R^n \to \R$ with $n = \tO((GR/\eps)^4)$, with Lipschitz constant at most $G$ on $B(\vec{0},R)$, such that any quantum algorithm that solves \Cref{prob:FOCM} with high probability on this function family must make $\Omega((GR/\eps)^2)$ queries to $f$ or $\FO(f)$ in the worst case.
\end{restatable}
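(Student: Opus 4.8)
The plan is to build a hard family of convex functions for which quantum queries reveal little information, essentially by combining a "hard-to-optimize in many directions" structure with the fact that quantum queries to such a structure cannot uncover more than one direction at a time.

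\paragraph{Overall strategy.}
The idea is to take $T \defeq \Theta((GR/\eps)^2)$ "rounds" of the max-type hard function, but to hide each round so that the $k$-th round only becomes visible (i.e., only affects the value of $f$ and its subgradient) after the algorithm has essentially solved the first $k-1$ rounds. Concretely, I would work in dimension $n = \tilde O(T^2)$, partitioning the coordinates into $T$ blocks, and define something like
\begin{equation}
  f(x) = \max_{k \in [T]} \left( c_k + \gamma \max_{i \in \text{block } k} z^{(k)}_i x_i \right),
\end{equation}
where the offsets $c_k$ are decreasing in $k$ so that to make progress on the objective one must drive down the contribution of block $1$, then block $2$, and so on, in sequence. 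Each block $k$ carries its own hidden sign string $z^{(k)}$, and "solving" block $k$ forces the algorithm to learn enough about $z^{(k)}$. This is the standard chaining/"Nemirovski function" construction underlying classical $\Omega((GR/\eps)^2)$ lower bounds; the novelty is to make the information-revelation argument work against quantum queries.

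\paragraph{The quantum information step.}
The heart of the argument is: a single quantum query to $\FO(f)$, when the algorithm has not yet "reached" block $k$, reveals essentially nothing about $z^{(k)}$. Here I would lean on the same quantum-query lower bound technology that makes learning a hidden string hard — a hybrid argument or the adversary method — to show that distinguishing a function with hidden strings $(z^{(1)},\dots,z^{(T)})$ from one where $z^{(k)}$ is resampled requires many queries, \emph{unless} those queries are concentrated at points where block $k$ is already "active." Since blocks only become active sequentially, an algorithm making $o(T)$ queries can activate only $o(T)$ blocks, so with high probability it never learns $z^{(k)}$ for most $k$, hence cannot have driven down the corresponding block's contribution, hence cannot have an $\eps$-optimal point. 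One technically clean way to package this: reduce from a $T$-fold sequential composition where each stage is an instance of the search/learning problem for which Belovs-type or adversary bounds give a $\tilde\Omega(\sqrt{m})$ quantum query lower bound per stage with $m$ the block size; choosing $m = \tilde\Theta(1)$ per block but $T$ blocks, and arguing the stages cannot be parallelized quantumly, yields $\tilde\Omega(T) = \tilde\Omega((GR/\eps)^2)$ overall. I would also need a "robustness to small perturbations" step so that the wall between adjacent blocks cannot be tunnelled through by querying slightly outside the domain.

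\paragraph{Main obstacle.}
The hard part will be the sequential/"no parallel speedup" argument in the quantum setting: classically, the statement "with few queries you only learn a few blocks" is an immediate union bound, but quantumly a query is a superposition over all points and could in principle touch all $T$ blocks at once. The crux is to show that the amplitude a bounded-query quantum algorithm can place on the region where block $k$ is active is small for all but $o(T)$ values of $k$ — equivalently, that activating block $k$ is itself a hard (search-like) subtask that cannot be done "for free" in superposition. I expect this to require either a carefully designed hybrid argument that swaps out one $z^{(k)}$ at a time and bounds the query-weight on block $k$'s active region (à la the BBBV hybrid method), or a direct adversary-bound construction for the composed function. Getting the parameters to line up — block size vs. number of blocks vs. Lipschitz constant vs. the $\eps$ gap, while keeping $n = \tilde O((GR/\eps)^4)$ — will be the fiddly but routine part once the core composition lemma is in hand.
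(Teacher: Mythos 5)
There is a genuine gap, and it is exactly the step you flag as the ``main obstacle'': the argument that a superposition query cannot make progress on many stages at once is not supplied, and with the instance you propose it is unlikely to be suppliable. Your per-stage randomness is a hidden sign string $z^{(k)}$ over a known coordinate block with known offsets $c_k$, but this is precisely the kind of structure the paper shows is quantumly cheap (\Cref{thm:quantumub}): a single function-value query at a scaled indicator vector of any subset of a block implements an $\mathsf{OR}$ query to that block's string, and Belovs' algorithm then learns a block of size $m$ in $O(\sqrt{m})$ queries. More importantly for sequentiality, nothing in your construction hides block $k$ before earlier blocks are ``solved'': whether block $k$'s term $c_k + \gamma\max_i z^{(k)}_i x_i$ achieves the outer max at a query point supported on block $k$ is governed by the publicly known offsets and $\norm{x}$, not by the algorithm's progress, so either later blocks are queryable from the start (and leak information with constant probability per basis query point, killing any hybrid swap) or they are unreachable inside $B(\vec 0,1)$ (and then they cannot be forced to matter for $\eps$-optimality). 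You also need the per-stage value drops to fit in a total budget of $GR$, which strains $T=\Theta((GR/\eps)^2)$ stages with coordinate blocks and fixed offsets.

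What actually makes the hybrid argument go through in the paper is a property your construction lacks: the unseen randomness is \emph{continuous} (Haar-random orthonormal directions $v_t,\ldots,v_k$), and by concentration of measure on the sphere, for \emph{every} fixed query point $x$ the oracle's response is, with probability $1-1/\poly(n)$ over the unseen directions, completely independent of them (the tie-breaking term $(k-i)\gamma\norm{x}$ forces the argmax into the already-revealed prefix; \Cref{lem:argmax}). That ``for every basis query point'' quantification is what lets one expand a superposition query in the computational basis and bound each term, so a BBBV hybrid that replaces the true oracle by a prefix-truncated oracle one query at a time loses only $O(k^2/\poly(n))$ in total (\Cref{lem:similar}), and the last hybrid provably cannot output a point correlated with the unseen $v_k$. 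Your proposal would need an analogous ``response is almost surely independent of all unreached stages, simultaneously for all query points'' lemma, and discrete sign strings over fixed coordinate blocks do not concentrate in this way. Finally, the specific bound $n=\tO((GR/\eps)^4)$ in the statement is not something the basic argument yields (it gives $\tO((GR/\eps)^6)$); the paper reaches $\tO((GR/\eps)^4)$ only by adding the wall-function machinery of Bubeck et al.\ (\Cref{sec:wall}), which restricts informative queries to short vectors and thereby permits a smaller $\gamma$; your sketch has no counterpart to this, and the ``robustness'' step you mention is not developed.
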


Our lower bound uses ideas from the lower bound against parallel randomized algorithms recently established by Bubeck, Jiang, Lee, Li, and Sidford~\cite{BJLLS19}.

At a high level, the hard family of functions used in the randomized lower bound does not work for quantum algorithms because although classical algorithms can only learn $O(1)$ bits of information per query, quantum algorithms can make queries in superposition and learn a little information about many bits simultaneously.
We remedy this by choosing a new family of functions in which with high probability, no matter what query the quantum algorithm makes, the oracle's response is essentially the same. 
This allows us to control what the quantum algorithm learns per query, but now the instance is more complicated and the quantum algorithm learns $O(n)$ bits of information per query. Since the final output of the algorithm is a vector in $\R^n$, we cannot use the argument used before that simply compared the information learned per query to the total information that needs to be learned.
Instead we use the venerable hybrid argument~\cite{BBBV97} to control what the quantum algorithm learns and show that it cannot find an $\eps$-approximate solution to the minimization problem.

\subsection{Related work}

Classically, there is a long history of the study of oracle complexity (also known as black-box complexity or query complexity) for deterministic and randomized algorithms for non-smooth and smooth convex optimization. The setting considered in this paper, first-order convex optimization, where the algorithm has query access to the function value and the gradient, is very well studied. This topic is too vast to survey here, but we refer the reader to \cite{nemirovsky1983problem,Nes04,nesterov2018lectures,Bub15} for more information about upper and lower bounds that can be shown in this setting. 

There also has been work in the classical parallel setting, where in each round the algorithm is allowed to query polynomially many points and the goal is to minimize the number of rounds \cite{Nem94, balkanski2018parallelization, diakonikolas2019lower, BJLLS19}. Our work is most closely related to this setting and borrows many ideas from these works. Although quantum algorithms and parallel classical algorithms are incomparable in power, the constructions used to thwart parallel classical algorithms in these papers also help with showing quantum lower bounds.

In the quantum setting, there has been some work on convex optimization in the oracle model. There is also work on quantum gradient descent not in the oracle model. For example, one situation studied is where the dimension $n$ of the optimization space is very large and the vectors are encoded in quantum states of dimension $\log n$. See \cite{RSW+19,kerenidis2020quantum} and the references therein for more information. 
Another setting is the work on semidefinite programming, an important special case of convex optimization, but these algorithms exploit the specific structure of semidefinite programs~\cite{BS17,vAGGdW17,BKL+19,vAG19} and are not directly related to our work.

While in the classical setting, in general, a function value oracle is weaker than a gradient oracle, this is not the case in the quantum setting. Given a function value oracle, one can get a gradient oracle quite efficiently (with an $\widetilde{O}(1)$ overhead) \cite{Jor05, GAW19, vAGGdW20, CCLW20}. A similar result also holds for simulating a separation oracle given a membership oracle for convex bodies \cite{vAGGdW20, CCLW20}. As discussed before, our focus in this paper is to see if quantum algorithms can outperform classical algorithms when given a function oracle and gradient oracle since in many relevant settings, gradient computation is cheap in the classical case as well. 

The most related works are the papers by Chakrabarti, Childs, Li, and Wu~\cite{CCLW20} and van Apeldoorn, Gily{\'{e}}n, Gribling, and de Wolf~\cite{vAGGdW20}. 
These papers establish very similar results so we cover them together. 
These papers study the problem of black-box convex optimization, and their results are phrased in the slightly different language of membership and separation oracles, but this is not the main difference between their work and our work. 
Indeed, it is possible to recast our problem in their setting (see the discussion in the introduction in \cite{vAGGdW20} for how to do this). 
The main difference is that their algorithms are dimension-dependent and have complexities that depend on $n$, whereas we're working in the parameter regime where $n$ is large and so we seek algorithms that are independent of $n$. 

Specifically, \cite{CCLW20} and \cite{vAGGdW20} consider the problem of minimizing a linear function over a convex body given via a membership or separation oracle. A membership oracle for a convex body tells us whether a given point $x$ is in the convex body and a separation oracle in addition when $x$ is not in the body outputs a hyperplane that separates $x$ from the convex body. 
Classically, the problem of outputting an $\eps$-approximate solution can be solved with $O(n^2 \, \mathrm{polylog}(\cdot))$ queries to a membership oracle, where we are suppressing polylogarithmic dependence on several parameters (including $\eps$). 
These two papers show a quantum algorithm that makes only $O(n \, \mathrm{polylog}(\cdot))$ membership queries. 
The key technical component of this is a construction of a separation oracle from a membership oracle with only polylogarithmic overhead. To do this, they first show how to obtain an approximate subgradient oracle from a function oracle with only polylogarithmic overhead.

There are also several lower bounds shown in these papers. In \cite{vAGGdW20}, the authors prove that quantum algorithms do not give any advantage over classical algorithms in the setting where we are not given a point inside the convex body to start with. This setting is not directly comparable to our setting, as far as we are aware. In the setting where we do know a point inside the convex body, which is very similar to our setting, \cite{vAGGdW20, CCLW20} prove a lower bound of $\Omega(\sqrt{n})$, which is quadratically worse than their algorithm. While, in general, their results are incomparable to our results, one specific comparison to our results is that \cite[Theorem 3.3]{CCLW20} essentially shows a $\tOmega(\min\{GR/\eps, \sqrt{n}\})$ lower bound on the number of oracle calls to a function value oracle for the setting in \Cref{prob:FOCM}.\footnote{This is equivalent to our setting, where we have a function value and gradient oracle, due to their results.} Note that this is quadratically worse than our tight lower bound (\Cref{thm:quantumlb}) in the dimension-independent setting (i.e., when the dimension $n$ is large compared to $GR/\eps$).

\subsection{Paper organization and summary of contributions}
We first present some preliminaries on convex optimization in \Cref{sec:prelim}. In \Cref{sec:randomizedlb} we reprove the lower bound for randomized algorithms (\Cref{thm:randomizedlb}) using a simpler argument compared to prior works. In \Cref{sec:qalg}, we show that quantum algorithms can solve the hard instance from \Cref{thm:randomizedlb} faster than randomized algorithms, obtaining a quadratic speedup (\Cref{thm:quantumub}). In \Cref{sec:quantumlb}, we present a different hard instance and show our main result that quantum algorithms cannot obtain any speedup over gradient descent for the first-order convex optimization problem (\Cref{thm:quantumlb}). We conclude with open problems in \Cref{sec:open}.

%=============================================================================
\section{Convex optimization preliminaries}
\label{sec:prelim}

As described in the introduction, we are interested in approximately minimizing a convex function $f: \R^n \rightarrow \R$ on some closed convex set $\K \subseteq \R^n$. 
A function $f:\R^n \to \R$ is convex if for all $x,y\in \R^n$ and $t \in [0,1]$, 
\begin{equation}
	t f(x) + (1-t) f(y) \geq f(tx + (1-t)y).
\end{equation}

A set $\K \subseteq \R^n$ is convex if the line segment joining two points in $\K$ is also contained in $\K$. We will consider convex sets of bounded size, and specifically let $2R$ be the diameter of $\K$, i.e., 	
\begin{equation}\label{eqn:diameter}
	\max_{x,y \in \K} \norm{x-y}\leq 2R,
\end{equation}
where $\norm{z} \defeq \sqrt{\sum z_i^2}$ is the Euclidean norm. 

It turns out that the query complexity of first-order convex optimization depends only on $R$ no matter how complicated the set $\K$ happens to be.
However, to obtain an algorithm with efficient time complexity we require that the set $\K$ be simple enough that we can efficiently implement a projection operator for $K$. This means given any $y \in \R^n$, we can efficiently compute $\PK(y) \in \K$, which satisfies $\norm{\PK(y)-y} = \min_{z \in \K} \norm{z-y}$.
Since the main result of this paper is a lower bound, our lower bound is stronger if shown for a simple convex set $\K$. So throughout this paper we work with the set $\K = B(\vec 0, R)$, the $\ell_2$-ball of radius $R$ around the origin. 

In the model of first-order black-box optimization, we have access to a black-box that computes the function $f$ on any input $x \in \R^n$. In addition to this, we also have a \emph{first-order oracle}, $\FO(f)$, which when queried at any point $x \in \R^n$ returns some vector $g_x \in \R^n$ that satisfies for all $y \in \R^n$, 
\begin{align}\label{eqn:subgrad}
	f(y) \geq f(x) + \iprod{g_x}{y-x}.
\end{align}
Since $f$ is convex, it is known that such a vector $g_x$ exists for all $x \in \R^n$~\cite{Nes04}. Any vector $g_x$ satisfying~\eqref{eqn:subgrad} is called a subgradient of $f$ at $x$, and the set of all subgradients at $x$ is called the subdifferential at $x$ and denoted by $\partial f(x)$. 
If $f$ is differentiable at $x$ then $g_x$ is unique and equal to $\nabla f(x)$, the gradient of $f$ at $x$, defined as 
\begin{equation}
	\nabla f(x) \defeq \left(\frac{\partial f(x)}{\partial x_1}, \ldots, \frac{\partial f(x)}{\partial x_n}\right).
\end{equation}
We will say that the function $f$ has Lipschitz constant at most $G$ in $\K$ if $\norm{g_x} \leq G$ for every $x \in \K$.\footnote{%
This is slightly different from the usual definition of the Lipschitz constant where we would say $f$ is $G$-Lipschitz in $\K$ if for all $x,y \in \K$, $\abs{f(x) - f(y)} \leq G \norm{x-y}$. 
Our definition is the same as requiring the function $f$ to be $G$-Lipschitz according to this definition in an open set that contains $\K$.}

As described in \Cref{prob:FOCM}, we are interested in algorithms that take as inputs the parameters $n$, $G$, $R$, and $\epsilon > 0$, and have access to $f$ and a first-order oracle $\FO(f)$, and output ${x} \in B(\vec 0, R)$ such that $f({x}) - f(x^*) \leq \epsilon$, where $x^* \defeq \argmin_{x \in B(\vec 0, R)} f(x)$. 

In the quantum setting, we have quantum analogues of these oracles. There is a straightforward generalization of any oracle to the quantum setting, which makes the classical oracle reversible and then allows queries in superposition to this oracle. This quantum generalization of the oracle is justified by the fact that if we had a classical circuit or algorithm computing a function $f$, then it is possible in a completely black-box manner to construct the quantum oracle corresponding to $f$, and this oracle would then support superposition queries. We discuss quantum oracles in more detail in \Cref{sec:quantumlb}, but for now it is sufficient to consider them as computing the same functions as the classical oracles, except that they can additionally be queried in superposition.

Note that it is sufficient to consider the special case of the problem where $G=R=1$. While this seems like a special case, given an $f$ and $\K$ with Lipschitz constant $G$, radius $R$, and optimization accuracy $\epsilon$, we can instead minimize $\hat{f}(x) \defeq \frac{1}{GR} f\left(R{x}\right)$ over $\hat{\K} \defeq \K / R$, which have Lipschitz constant and radius $1$ up to an accuracy of $\frac{\epsilon}{GR}$. So we consider $G=R=1$ without loss of generality, or for general $G$ and $R$, the complexity must be a function of $GR/\eps$.

The query complexity of an algorithm that solves \Cref{prob:FOCM} is the maximum number of oracle calls it makes for fixed values of $n$, $G$, $R$, and $\eps$, where the maximum is taken over all convex functions $f$ with Lipschitz constant at most $G$, and all first order oracles $\FO(f)$ for $f$ (i.e., the algorithm must work for any choice of first-order oracle that correctly outputs some subgradient of $f$ at $x$). As discussed, the query complexity must be a function of $n$ and $GR/\eps$. Furthermore, since we're interested in dimension-independent algorithms, we study algorithms that only depend on $GR/\eps$ and not on $n$.

Given a class of algorithms, such as deterministic, randomized, or quantum algorithms, the query complexity of first-order Lipschitz convex optimization for that class of algorithms is the minimum query complexity of any algorithm in that class that solves \Cref{prob:FOCM}.

As we show in \Cref{thm:randomizedlb} in \Cref{sec:randomizedlb}, the randomized query complexity of this problem (and hence the deterministic query complexity) is at least $\Omega((GR/\eps)^2)$ in the dimension-independent setting.

In the rest of this section, we will prove that the deterministic query complexity of \Cref{prob:FOCM} is $O((GR/\eps)^2)$, matching the (randomized) lower bound of \Cref{thm:randomizedlb}. In particular, we describe how the well-known gradient descent algorithm, or more precisely a variant known as the projected subgradient descent algorithm, achieves this upper bound. We now restate \Cref{thm:PGD} for convenience:

\PGD*

\begin{proof}
Without loss of generality we assume $G=R=1$. The projected subgradient descent algorithm is easy to describe. We start by setting the initial vector $x_0 = \vec 0$. The algorithm then computes $x_{t+1}$ from $x_t$ using the formula
\begin{align}\label{eqn:GD}
	x_{t+1} = \PK(x_t - \eta \cdot g_{x_t}),
\end{align}
where $\eta > 0$ is the step size, a parameter of the algorithm that we must choose, and $\PK$ is the projector onto $B(\vec 0,1)$.
After $T$ steps, the algorithm outputs $\hat{x}_T \defeq \frac{1}{T} \sum_{t=0}^{T-1} x_t$. To obtain the claimed upper bound we set the step size $\eta = \eps$.

Now we claim that for any $T\geq 1/\eps^2$, the output $\hat{x}_T$ satisfies:
\begin{align}
	f(\hat{x}_T) - f(x^*) \leq \eps.
\end{align} 
We prove this using the potential function $\norm{x_t - x^*}^2$. We have
\begin{align}
		\norm{x_{t+1} - x^*}^2 &= \norm{\PK(x_t - \eta g_{x_t}) - x^*}^2 \leq \norm{x_t - \eta g_{x_t} - x^*}^2 \\
		&= \norm{x_t - x^*}^2 - 2 \eta \iprod{
			g_{x_t}}{x_t - x^*} + \eta^2 \norm{g_{x_t}}^2, 
\end{align}
where the inequality uses the fact that projecting a vector outside $\K$ to $\K$ can only reduce its distance to a point in $\K$. We then use the Lipschitz condition ($\norm{g_{x_t}}^2\leq 1$) and the definition of the subgradient in \cref{eqn:subgrad} to get
\begin{align}
	\norm{x_{t+1} - x^*}^2 &\leq \norm{x_t - x^*}^2 - 2 \eta \left(f(x_t) - f(x^*)\right) + \eta^2.
\end{align}
	Taking a telescopic sum and averaging, we obtain
	\begin{align}
		\left(\frac{1}{T} \sum_{t=0}^{T-1} f(x_t) \right)- f(x^*) 
		\leq \frac{\norm{x_0 - x^*}^2 - \norm{x_{T-1} - x^*}^2}{2 \eta T} + \frac{\eta}{2} \leq \frac{1}{2 \eta T} + \frac{\eta}{2} \leq \eps,
	\end{align}
where the second inequality used the fact that $\norm{x_0 - x^*} = \norm{x^*} \leq R = 1$.
By convexity of $f$, $f(\hat{x}_T) \leq \frac{1}{T} \sum_{n=0}^{T-1} f(x_t)$, which proves the result.
\end{proof}

Note that although we stated and proved this for $\K = B(\vec 0, R)$, the upper bound on the number of queries made to the oracles holds for any $\K$ that is contained in $B(\vec 0 , R)$. However, if we wanted to implement this algorithm, then the time complexity would depend on how hard it is to implement the operator $\PK$, which projects onto the set $\K$.

 %=============================================================================

% This file contains the randomized lower bound and quantum speedup, sec:randomizedlb and sec:qalg
\section{Randomized Lower Bound}
\label{sec:randomizedlb}

In this section, we prove a lower bound for randomized first-order methods for non-smooth convex optimization, restated here for convenience:

\RLB*

This lower bound is known and multiple proofs can be found in the literature \cite{nemirovsky1983problem, woodworth2017lower}. Our proof is elementary and we did not find it written anywhere, although it is conceptually similar to the one in \cite{nemirovsky1983problem}, and so we include it here for completeness. Our proof also has the dimension $n=\Theta(1/\eps^2)$, without any log factors, which is the best possible. As far as we are aware, the previous proofs required larger dimension. As we will see later, the family of instances used  is also interesting because we can get a quantum speedup for it, because of which we have to look at other instances to prove the quantum lower bound.

We can now define the family of convex functions used in the lower bound. For any $\epsilon>0$, we set $n = \floor{.9/\epsilon^2}$ and look at the following class of functions.

\begin{definition}
    Let $z \in \pmone^n$. Let $f_z: \R^n \rightarrow \R$ be defined as 
    \begin{equation}
        f_z(x_1,\dots,x_n) = \max_{i \in [n]} z_ix_i.
    \end{equation}
\end{definition}

Each such function is convex since it is a maximum of convex functions \cite[Theorem 3.1.5]{Nes04}. Note that if $f_z(x) = z_ix_i$ for some $i \in [n]$, then $z_ie_i$ is a subgradient of $f_z$ at $x$ (since $f_z(x) + \dotp{z_ie_i}{y-x} = z_iy_i \leq f_z(y)$). Hence the function is $1$-Lipschitz. We can also see that within the unit ball the function is minimized at the point 
\begin{equation}
   x^* =  \frac{-1}{\sqrt{n}} \sum_{i\in[n]} z_i e_i,
\end{equation}
and $f_z(x^*)=-1/\sqrt{n}$. Clearly given $x^*$ we can recover $z$ from it. We now show $z$ can even be recovered from an $\epsilon$-approximate minimum of $f_z$.

\begin{lemma}\label{lem:recoverz}
    Let $x$ be such that $f_z(x)-f_z(x^*)\leq \eps$. Then we can recover $z \in \B^n$ from $x \in \R^n$.
\end{lemma}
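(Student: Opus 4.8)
The plan is to show that an $\eps$-approximate minimizer $x$ must have all its coordinates reasonably correlated with $-z$, so that $\mathrm{sign}(x_i) = -z_i$ for every $i$, from which $z$ is immediately recovered by reading off signs. Concretely, since $f_z(x) = \max_i z_i x_i$ and $f_z(x^*) = -1/\sqrt{n}$, the hypothesis $f_z(x) - f_z(x^*) \le \eps$ says $\max_i z_i x_i \le \eps - 1/\sqrt{n}$, i.e. $z_i x_i \le \eps - 1/\sqrt{n}$ for \emph{every} $i \in [n]$. Recalling that we chose $n = \floor{0.9/\eps^2}$, we have $1/\sqrt{n} \ge \eps/\sqrt{0.9} > \eps$, so $\eps - 1/\sqrt{n} < 0$; hence $z_i x_i < 0$ for every $i$, which means $x_i \ne 0$ and $z_i = -\mathrm{sign}(x_i)$. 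That is the entire recovery procedure.

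The key steps in order: (i) write down the value $f_z(x^*) = -1/\sqrt n$ (already established in the text just before the lemma); (ii) rewrite the hypothesis as a simultaneous bound $z_i x_i \le \eps - 1/\sqrt{n}$ over all coordinates $i$, using that the max being at most something means each term is at most that something; (iii) plug in $n \le 0.9/\eps^2$ to conclude $1/\sqrt n > \eps$ and therefore the right-hand side is strictly negative; (iv) deduce $\mathrm{sign}(x_i)$ is well-defined and opposite to $z_i$, giving $z_i = -\mathrm{sign}(x_i)$ for all $i$.

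There is essentially no obstacle here — the only thing to be careful about is the constant: we need $n$ chosen small enough that $1/\sqrt n$ strictly exceeds $\eps$, which is exactly why the definition uses $n = \floor{0.9/\eps^2}$ rather than $\floor{1/\eps^2}$ (the slack of $0.9$ gives $1/\sqrt n \ge \eps/\sqrt{0.9} \approx 1.054\,\eps > \eps$, comfortably, even after the floor, as long as $\eps$ is small enough that the floor doesn't bite too hard; for the regime of interest this is fine, and one could also note $n \le 0.9/\eps^2$ suffices directly). I would also remark that this same computation is what will be reused in the subsequent information-theoretic argument: an $\eps$-approximate solution encodes all $n$ bits of $z$, so any algorithm producing one must have effectively learned $z$.
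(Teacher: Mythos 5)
Your proof is correct and is essentially the paper's argument: the paper recovers $z$ as $-\mathrm{sign}(x_i)$ coordinatewise and rules out any sign agreement by noting it would force $f_z(x)\ge 0$ while $\eps < 1/\sqrt{n}$, which is the contrapositive of your direct computation $z_i x_i \le \eps - 1/\sqrt{n} < 0$. One small simplification you could make: since $n=\floor{0.9/\eps^2}\le 0.9/\eps^2$, the floor can only increase $1/\sqrt{n}$, so no caveat about $\eps$ being small is needed.
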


\begin{proof}
    Let $s_x \in \{-1,+1\}^n$ be the vector with $(s_x)_i = \mathrm{sign}(x_i)$, where $\mathrm{sign}(a) = +1$ if $a\geq 0$ and $\mathrm{sign}(a) = -1$ otherwise. We claim that $z = -s_x$. Toward a contradiction, if $(s_x)_i \neq -z_i$ for some $i$, then $(s_x)_i = z_i$, since these only take values in $\{-1,+1\}$. In this case, $x_i$ and $z_i$ agree in sign, and hence $f_z(x) \geq z_ix_i \geq 0$. Since $\epsilon < 1/\sqrt{n}$ (because of our choice of $n$ above) the point $x$ cannot satisfy $f_z(x)-f_z(x^*)\leq \eps$. 
\end{proof}

Since this function is not differentiable everywhere, for our lower bound we need to specify the behavior of the subgradient oracle on all inputs. The function is not differentiable only at $x \in \R^n$ where the maximum is achieved at multiple indices. 
In this case, the subgradient oracle responds as if the maximum was achieved on the smallest such index $i$, i.e., it responds with $z_ie_i$.
Note that for this function, querying the subgradeint oracle allows us to simulate a call to the function oracle as well, since the response is $z_i e_i$ for the index $i$ that achieves the maximum, so the function evaluates to $z_ix_i$ at that point, which we can compute since we know $x$.
So we can assume without loss of generality that an algorithm only queries the subgradient oracle.

Now that the problem is fully specified, we will show that any randomized optimization algorithm using the function oracle and this subgradient oracle will require $\Omega(n)$ queries in order to solve \Cref{prob:FOCM} with a constant probability of success.

The following will be the crux of the lower bound. Let $I \subseteq [n]$. We say a distribution $\mD$ over $\pmone^n$ is $I$-fixed if for $z \sim \mD$ the random variable $z_I$ is fixed and $z_{\overline{I}}$ is uniform over $\pmone^{\overline{I}}$.

\begin{lemma}\label{lem:fixeddistributions}
    Let $z$ be distributed according to an $I$-fixed distribution. Let $x$ be an arbitrary query made to the $f_z$ oracle. After one query to the subgradient oracles, the conditional distribution on $z$ given the answer is $I'$-fixed with $I \subseteq I'$ and $\E[|I'|] \leq |I| + 2$.
\end{lemma}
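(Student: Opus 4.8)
The plan is to expose what a single subgradient answer reveals about $z$ by scanning the coordinates in a well-chosen order, and then to bound the expected number of \emph{newly} pinned-down coordinates by a geometric series.

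First I would condition on the fixed part of the distribution. Write $z_I = w$ for the fixed string; under the $I$-fixed prior $\mD$ the bits $z_{\overline I}$ are then independent and uniform. So it suffices to show that, over the randomness of $z_{\overline I}$ and of the resulting answer, the posterior is $I'$-fixed for some (answer-dependent) $I' \supseteq I$ with $\E[|I' \setminus I|] \le 2$. Recall that the query at $x$ returns the vector $z_j e_j$, where $j$ is the smallest index attaining $\max_i z_i x_i$; equivalently, the answer is the pair $(j, z_j)$ (the function value $z_j x_j$ is a function of this pair and so carries no extra information).

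Next I would pin down exactly which entries of $z$ are forced by the answer. Sort the coordinates in decreasing order of $|x_i|$, breaking ties by index, and let $t$ be the number of indices with $x_i \neq 0$. Call an index $i$ with $x_i \neq 0$ \emph{good} if $z_i = \mathrm{sign}(x_i)$ and \emph{bad} otherwise, so that $z_i x_i$ equals $|x_i|$, $0$, or $-|x_i|$ according as $i$ is good, zero, or bad. One checks: (i) if $\max_i z_i x_i > 0$ then $j$ is the \emph{first good index} in the sorted order (which lies among the first $t$), and the answer forces precisely that the sorted prefix $1,\dots,j$ is a block of bad indices followed by the good index $j$, while saying nothing about the later indices (each later index has $|x_i| \le |x_j|$, and a tie is broken in favour of $j$); (ii) if $\max_i z_i x_i = 0$ then there is no good index, $j$ is the zero-coordinate of smallest index, and the answer forces that all of $1,\dots,t$ are bad, plus the single bit $z_j$; (iii) if $\max_i z_i x_i < 0$ then $t = n$, every index is bad, and the answer forces all of $z$. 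In every case the forced set has the form $\{\text{sorted positions } 1,\dots,r\}$ with $r \le t$, together with at most one extra coordinate, and the answer does not constrain the remaining coordinates; so the posterior is indeed $I'$-fixed with $I' = I \cup \{\text{forced set}\}$.

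Finally I would estimate $\E[|I' \setminus I|]$. Let $g \in \{1,\dots,t+1\}$ be the sorted position of the first $I$-coordinate that is good under $w$ (and $g = t+1$ if none); this is deterministic. Let $p_1 < \cdots < p_k$ be the sorted positions of the $\overline I$-coordinates lying strictly before $g$; their values are independent uniform bits. Scanning in sorted order, the answer is located at the first $p_\ell$ that comes up good---which happens with probability $2^{-\ell}$, and then reveals exactly the $\ell$ new coordinates $p_1,\dots,p_\ell$---or, with probability $2^{-k}$, all of $p_1,\dots,p_k$ are bad, in which case by (i)--(iii) at most $k+1$ new coordinates are revealed ($k$ of them if $g \le t$, with one extra zero-coordinate possibly if $g = t+1$). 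Hence
\[
\E\bigl[|I' \setminus I|\bigr] \;\le\; \sum_{\ell=1}^{k} \ell\, 2^{-\ell} \;+\; (k+1)\,2^{-k} \;=\; \bigl(2 - (k+2)2^{-k}\bigr) + (k+1)2^{-k} \;=\; 2 - 2^{-k} \;\le\; 2,
\]
which gives the lemma. The main obstacle is the structural step: correctly arguing that the answer forces only a sorted prefix (plus at most one stray zero-coordinate) and nothing about the remaining coordinates, so that the revealed set is genuinely small \emph{and} the posterior keeps the $I'$-fixed form; once that fact is in hand, the probability estimate is just the familiar identity $\sum_{\ell \ge 1} \ell\, 2^{-\ell} = 2$.
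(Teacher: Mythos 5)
Your proof is correct and takes essentially the same route as the paper's: the oracle's answer reveals exactly a prefix (in decreasing-$\abs{x_i}$ order) of wrongly guessed signs ending in one correct guess, the posterior on the remaining free bits stays uniform, and the expected number of newly fixed coordinates is bounded by $\sum_{\ell \ge 1} \ell\, 2^{-\ell} = 2$. Your explicit treatment of the zero-coordinate and all-bad cases is in fact somewhat more careful than the paper's, which handles these edge cases only implicitly.
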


\begin{proof}
    Let $x$ be the algorithm's query. 
    The index $i$ that achieves the maximum in the definition of $f_z(x)$ can be computed as follows. 
    Let $i_1, \dots, i_n$ be the ordering of the indices $1$ to $n$ in decreasing order of $|x_i|$, with ties broken with the natural ordering on integers. 
    The oracle outputs $f_z(x) = z_{i_j}x_{i_j}$ and chooses the subgradient $z_{i_j}e_{i_j}$ where $j$ is the smallest index for which $x_{i_j}$ agrees in sign with $z_{i_j}$, and if no such index exists, then $j=n$. 
    
    Since $f_z(x)$ can be computed given the subgradient $z_{i_j}e_{i_j}$, the only information obtained from a query is the prefix $\{z_{i_k}\}_{k \leq j}$. 
    In other words, if the subgradient oracle responds with $z_{i_j}e_{i_j}$, then we have learned that for all indices $k\leq j$, we must have $\mathrm{sign}(x_i)=-z_i$, but we have not learned any more since the oracle's output does not depend on the bits of $z$ with index $i_k$ with $k>j$.
    After this query, we know the bits $z_{i_k}$ with $k \leq j$, but conditioned on these, the distribution on the remaining bits of $z$ continues to be uniform. 
    This is an $I'$-fixed distribution with $I' = I \cup \{i_k\}_{k \leq j}$.
    Intuitively, $I'$ cannot be much larger than $I$ since an index $i_k$ is part of this set only if the algorithm correctly guessed the sign of $z_{i_k}$ for this index and all indices with a smaller value of $k$.
    Since the initial distribution $z$ was uniformly at random outside of $I$ and $x$ is fixed, the probability of correctly guessing the first index (according to the $i_j$ ordering) that was not fixed is $1/2$, the probability of guessing the first two is $1/4$ and so on. 
    Thus the expected number of new entries fixed by one query is $\sum_{k=1}^{n \setminus |I|} k \cdot \frac{1}{2^k} \leq 2$.
\end{proof}

We can use this to show establish the final claim.

\begin{lemma}
    Let $z$ be sampled uniformly at random from $\pmone^n$. If a randomized algorithm $\mA$ outputs an $x$ with $f_z(x)-f_z(x^*)\leq \epsilon$ with probability at least $2/3$, then its query complexity is at least $n/3 - 1$.
\end{lemma}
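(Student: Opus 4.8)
The plan is to combine the information-tracking from \Cref{lem:fixeddistributions} with the recovery statement of \Cref{lem:recoverz} via a simple counting/potential argument. Start with $z$ uniform over $\pmone^n$, which is the $\emptyset$-fixed distribution, so $|I_0| = 0$. Run the algorithm $\mA$ and let $I_0 \subseteq I_1 \subseteq \cdots \subseteq I_q$ be the (random) sequence of ``fixed index sets'' after each of the $q$ queries, where $I_t$ is the set such that the conditional distribution of $z$ given the transcript up to query $t$ is $I_t$-fixed. Since the algorithm is randomized, I would first condition on the internal randomness of $\mA$ (equivalently, invoke Yao's principle and assume $\mA$ is deterministic against the uniform input distribution); then the only randomness is in $z$, and \Cref{lem:fixeddistributions} applies at each step. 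By that lemma, $\E[|I_{t+1}| - |I_t| \mid I_t] \leq 2$, so by the tower rule $\E[|I_q|] \leq 2q$.

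Next I would argue that a correct output forces $|I_q|$ to be essentially all of $[n]$. Concretely, by \Cref{lem:recoverz}, if the final output $x$ satisfies $f_z(x) - f_z(x^*) \leq \eps$, then $z = -s_x$ is determined by $x$. But $x$ is a deterministic function of the transcript (the sequence of oracle answers), and given the transcript the posterior on $z$ is uniform on the coordinates outside $I_q$; hence for the output to be correct with the posterior-conditional probability being meaningful, $z$ restricted to $\overline{I_q}$ must be forced, i.e. we need $|I_q| = n$ whenever $\mA$ succeeds. More carefully: conditioned on any fixed transcript, $z_{\overline{I_q}}$ is uniform and independent of $x$, so $\Pr[z = -s_x \mid \text{transcript}] = 2^{-|\overline{I_q}|} = 2^{-(n - |I_q|)}$. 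Therefore the overall success probability is $\E\big[2^{-(n-|I_q|)}\big]$, where the expectation is over $z$ (equivalently over transcripts).

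Now I would put the two facts together. If the success probability is at least $2/3$, then $\E\big[2^{-(n-|I_q|)}\big] \geq 2/3$, and since $2^{-(n-|I_q|)} \leq 1$ always, a Markov-type bound gives $\Pr[|I_q| \geq n-1] \geq 1/3$ — actually I can be cleaner: $2/3 \leq \E[2^{-(n-|I_q|)}] \leq \Pr[|I_q| = n] \cdot 1 + \Pr[|I_q| \leq n-1]\cdot \tfrac12$, which rearranges to $\Pr[|I_q| = n] \geq 1/3$. Hence $\E[|I_q|] \geq n/3$. Combining with $\E[|I_q|] \leq 2q$ yields $q \geq n/6$. To get the sharper bound $q \geq n/3 - 1$ claimed in the statement, I would instead track the potential $\E[|I_q|]$ against the per-step increment of $2$ but note that the final query's answer $z_{i_j}e_{i_j}$ itself reveals $j$ new bits at once rather than in expectation, so a slightly tighter accounting is needed; alternatively, apply the expectation bound $\E[|I_q|] \le 2q$ more carefully together with $\E[|I_q|] \ge n\Pr[\text{success}] - (\text{something})$. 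I expect the main obstacle to be exactly this constant-tightening: getting from the clean ``$\E[|I_q|]\le 2q$ and success $\Rightarrow |I_q|$ large'' sketch to the precise threshold $n/3 - 1$ requires handling the last query separately (on success the algorithm may ``use up'' information that the expectation bound smears out) and bounding $\Pr[|I_q| = n]$ from below by the exact $2/3$ rather than through a lossy inequality. Everything else — conditioning away the algorithm's randomness, the telescoping expectation, and the reduction to recovering $z$ — is routine given \Cref{lem:fixeddistributions} and \Cref{lem:recoverz}.
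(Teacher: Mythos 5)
Your framework is the same as the paper's (tracking the $I$-fixed posterior via \Cref{lem:fixeddistributions}, the telescoping bound $\E[|I_q|]\le 2q$, and \Cref{lem:recoverz} to tie success to recovering $z$), but as written the proposal does not reach the stated bound: your two-point estimate $\E\bigl[2^{-(n-|I_q|)}\bigr]\le \Pr[|I_q|=n]+\tfrac12\Pr[|I_q|<n]$ only yields $\Pr[|I_q|=n]\ge 1/3$, hence $\E[|I_q|]\ge n/3$ and $q\ge n/6$, and your suggestions for recovering the factor of two are left vague. The idea you are missing is the paper's one-line fix: modify $\mA$ so that after producing its output $x$ it makes one additional query at $x$ (this costs at most one extra query, which is where the $-1$ comes from). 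If $x$ is $\eps$-optimal, then by the proof of \Cref{lem:recoverz} every coordinate satisfies $\mathrm{sign}(x_i)=-z_i$, so the subgradient oracle queried at $x$ finds no sign agreement and, by the $j=n$ case in the proof of \Cref{lem:fixeddistributions}, its answer fixes the entire string; i.e., conditioned on success the posterior on $z$ is $[n]$-fixed. Hence $\E[|I_t|]\ge n\cdot\Pr[\text{success}]\ge 2n/3$ directly, and combining with $\E[|I_t|]\le 2t$ gives $t\ge n/3$ for the modified algorithm, so the original makes at least $n/3-1$ queries.

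For what it is worth, your route could also be completed without the extra query by replacing the lossy two-point bound with the chord bound $2^{-m}\le 1-\frac{m(1-2^{-n})}{n}$ for integer $m\in[0,n]$ (convexity of $2^{-m}$), applied to $m=n-|I_q|$: from $\E[2^{-m}]\ge 2/3$ one gets $\E[m]\le \frac{n}{3(1-2^{-n})}$, hence $\E[|I_q|]\ge \frac{2n}{3}-O(n2^{-n})$ and $q\ge n/3-1$. But this is exactly the kind of careful accounting your sketch defers, and the paper's extra-query trick achieves the claimed constant with no such analysis.
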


\begin{proof}
    When $\mA$ outputs a point $x$, we will require it to also query the oracle at $x$ to see if it is indeed $\epsilon$-optimal. This can increase its query complexity by at most one. Let the query complexity of this modified $\mA$ be $t$. Whenever $\mA$ does output an $\epsilon$-optimal point, \Cref{lem:recoverz} implies that the conditional distribution on $z$ is $[n]$-fixed. For each $i \in [0,..,t]$, let $I_i$ be the random variable such that the distribution on $z$ after $i$ queries of $\mA$ is $I_i$-fixed (\Cref{lem:fixeddistributions} implies that after any sequence of queries it will be an $I$-fixed distribution for some $I$). Since $z$ is sampled uniformly at random from $\pmone^n$, $I_0 = \emptyset$. And since we want the algorithm to succeed with probability at least $2/3$, $\E[\abs{I_t}] \geq 2n/3$.

    However, $\abs{I_t} = \sum_{i=1}^{t} \abs{I_i} - \abs{I_{i-1}}$, and it is a simple consequence of \Cref{lem:fixeddistributions} that $\E[\abs{I_i} - \abs{I_{i-1}}] \leq 2$ for all $i$. So by the linearity of expectation, $\E[\abs{I_t}] \leq 2t$ and hence $t \geq n/3$.
\end{proof}

This proves a lower bound of $\Omega(1/\eps^2)$ on the randomized query complexity of first-order convex minimization for a function with $G=R=1$. As noted earlier, this is without loss of generality and implies the more general bound in \Cref{thm:randomizedlb}.

\subsection{Quantum speedup}
\label{sec:qalg}

In this section we prove \Cref{thm:quantumub}, restated for convenience:

\QUB*

The quantum speedup for the above class of functions relies on Belovs' quantum algorithm for Combinatorial Group Testing~\cite{Bel14}. Belovs showed that given access to an oracle making $\mathsf{OR}$ queries to an $n$-bit string, the $n$-bit string can be learned in $O(\sqrt{n})$ quantum queries. More formally, Belovs showed the following~\cite{Bel14}.

\begin{theorem}
    Let $x\in\B^n$ and $O_x$ be the unitary that for every $S\subseteq [n]$ and $b \in \B$, satisfies $O_x|S\>|b\>=|S\>|b \oplus \mathsf{OR}_x(S)\>$, where $\mathsf{OR}_x(S)=1$ if there is an $i \in S$ such that $x_i = 1$, and $\mathsf{OR}_x(S) = 0$ otherwise. 
    Then we can learn $x$ with high probability with  $O(\sqrt{n})$ quantum queries to the oracle $O_x$.
\end{theorem}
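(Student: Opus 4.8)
The statement is a theorem of Belovs~\cite{Bel14}, and the route I would take is the general (negative‑weight) adversary method. Learning $x$ amounts to evaluating a function $\B^n \to \B^n$, i.e.\ one whose desired output states for distinct inputs are mutually orthogonal; for such state‑conversion tasks the \emph{dual} (minimization) semidefinite program of the adversary bound is not merely a lower bound on quantum query complexity but, by the near‑optimality of the adversary method, also yields an algorithm of matching cost. So the plan is: (1) write this dual SDP down for the oracle $O_x$, whose ``query positions'' are the subsets $S \subseteq [n]$ and for which $x,y$ ``differ at $S$'' exactly when $\mathsf{OR}_x(S) \neq \mathsf{OR}_y(S)$; (2) exhibit a feasible solution of objective value $T = O(\sqrt n)$; (3) invoke the algorithmic direction of the adversary bound to get an $O(\sqrt n)$‑query algorithm and boost its success probability to ``high'' by $O(1)$ independent repetitions and a majority vote over the outputs.

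For step (2) it is worth first identifying which queries are useful. A feasible solution assigns to each input $x$ and each set $S$ vectors $u_{x,S}, v_{x,S}$ in an auxiliary space with $\sum_{S\,:\,\mathsf{OR}_x(S)\neq\mathsf{OR}_y(S)} \langle u_{x,S}, v_{y,S}\rangle = 1$ for every $x \neq y$, and objective $T = \max_x \max\{\sum_S \|u_{x,S}\|^2,\ \sum_S \|v_{x,S}\|^2\}$. If $y$ is obtained from $x$ by flipping one coordinate $i$ from $0$ to $1$, then $\mathsf{OR}_x(S) \neq \mathsf{OR}_y(S)$ iff $i \in S$ and $S \cap \mathsf{supp}(x) = \emptyset$; hence certifying a $1$‑coordinate is cheap (the singleton $\{i\}$ suffices), whereas certifying the $0$‑coordinates requires weight spread over sets disjoint from $\mathsf{supp}(x)$. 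Balancing these two costs is what should produce the $\sqrt n$.

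The actual construction of the vectors is the crux, and here I would follow Belovs in exploiting the complete symmetry of the problem under relabelling $[n]$: one may take the solution invariant under the symmetric group, so that $u_{x,S}$ and $v_{x,S}$ depend only on the isomorphism type of the pair $(S, \mathsf{supp}(x))$ — essentially on $|S|$ and $|S \cap \mathsf{supp}(x)|$ — and are therefore governed by a handful of scalar parameters. One then reduces the progress constraint, written for \emph{all} pairs $x \neq y$, to a small number of scalar identities and chooses the parameters so that $T = \Theta(\sqrt n)$. An equivalent way to package the same dual solution is as a learning graph that routes a unit flow from $\emptyset$ to each certificate so that the product of total length and total weight is $O(n)$, making the learning‑graph complexity $O(\sqrt n)$.

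I expect the main obstacle to be exactly step (2): producing one family of vectors that meets the exact inner‑product constraint for \emph{all} ordered pairs of distinct inputs while keeping every row and column norm $O(\sqrt n)$. The constraints are strongly coupled because a single set $S$ contributes to the pair constraints of many $(x,y)$ at once, and a naive per‑pair construction inflates $T$ to $\Theta(n)$, the certificate complexity. Checking that the symmetrized ansatz satisfies every constraint, and that the optimum of the resulting small SDP really is $\Theta(\sqrt n)$ and not larger, is where the genuine work lies. Since the present paper only \emph{uses} this statement, one may instead simply cite~\cite{Bel14} and treat the construction as a black box.
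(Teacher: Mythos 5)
The paper itself gives no proof of this theorem: it is imported verbatim from Belovs~\cite{Bel14}, and your proposal correctly identifies both that source and the underlying technique the paper alludes to (a symmetry-reduced feasible solution to the dual of the negative-weight adversary SDP, together with the algorithmic direction of the adversary bound), so your route is essentially the same as the paper's. Like the paper, you treat the actual $O(\sqrt{n})$ dual construction as a black box from~\cite{Bel14}, which is consistent with how the statement is used here; just note that as a standalone proof the crux you flag in step (2) remains uncarried-out.
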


We can now prove \Cref{thm:quantumub}. 

\begin{proof}[Proof of \Cref{thm:quantumub}]
    In our optimization problem, making the query $x = \frac{1}{\sqrt{n}} \sum_{i \in S} e_i$ to the function oracle returns $f_z(x) = \frac{1}{\sqrt{n}}$ if there is an $i \in S$ such that $z_i = 1$. If there is no such $i \in S$, then it will output $f_z(x) = 0$, unless $S=n$, in which case it will output $-\frac{1}{\sqrt{n}}$.

    Hence a function value oracle for $f_z$ can be used to make $\mathsf{OR}$ queries to the string $z$, since it outputs $1$ if there is an $i \in S$ such that $z_i = 1$ and outputs $0$ (or $-1/\sqrt{n}$) otherwise. Using Belovs' algorithm, with $O(\sqrt{n})$ such queries, we can learn the locations of all the $1$s in $z$, which allows us to learn $z$ completely.
\end{proof}

This quantum algorithm is also essentially optimal for this problem and it is not hard to show an $\Omega(\sqrt{n}/\log n)$ lower bound for quantum algorithms. A similar lower bound is shown in \cite[Theorem 3.3]{CCLW20}, and we sketch a simpler proof of the claim here.

As discussed in the classical lower bound, what the subgradient oracle allows us to do is have a non-standard query to the unknown string $z \in \pmone^n$. In this non-standard query, we get to order the bits of $z$ however we like, and then submit a string in $\pmone^n$ and ask for the first index (according to our ordering) where our string agrees with $z$. As we showed in the classical lower bound, if we solve the optimization problem, then we also learn $z$. 

So we are left with answering the question of how hard it is to learn $z$ given these non-standard queries to $z$. Given standard queries to $z$, where we can only query one bit of our choice, it is well known that we need $\Omega(n)$ queries to learn $z$. But our non-standard query is easy to implement using Grover's algorithm with only $O(\sqrt{n})$ standard queries, since all we have to do is find the first bit of $z$ according to a known ordering where the queried string and $z$ agree. If the problem of learning $z$ with these non-standard queries used $T$ non-standard queries, then we could implement the non-standard queries ourselves with cost $O(\sqrt{n})$ and compose the two algorithms to obtain an algorithm for learning $z$ using standard queries with complexity $O(T\sqrt{n}\log n)$. (We have an additional log factor because we are composing two bounded-error algorithms.) Since this problem has a lower bound of $\Omega(n)$, we get $T=  \Omega(\sqrt{n}/\log n)$. It might be possible to remove this log factor using standard techniques for log factor removal (composing solutions of the dual of the adversary bound), but we have not attempted to work out the details. 
%=============================================================================
\section{Quantum lower bound}
\label{sec:quantumlb}

In this section, we show that for any $\eps$, there exists a $1$-Lipschitz family of functions such that any quantum algorithm that solves \Cref{prob:FOCM} on the unit ball must make $\frac{1}{100\epsilon^2}$ queries. In other words, there is no quantum first-order convex optimization algorithm that always outperforms the classical gradient descent algorithm described in \Cref{thm:PGD}. The function we will use was introduced by Nemirovsky and Yudin~\cite{nemirovsky1983problem}. To show the quantum lower bound, we adapt to the quantum setting the lower bound strategy of Bubeck et al.~\cite{BJLLS19} in the model of parallel algorithms.

We restate the main result proved in this section for convenience:

\QLB*

We start by first proving a qualitatively similar, but simpler result with a larger value of $n = \tO((GR/\eps)^6)$ in \Cref{subsec:quantumlowerbound}. If we only care about the optimality of gradient descent in the dimension-independent setting, this lower bound is sufficient. But if we also want to understand the trade-off between dimension-independent and dimension-dependent algorithms, then we would like to show this lower bound with as small a value of $n$ as we can. In \Cref{sec:wall}, we improve the lower bound to achieve the value of $n$ stated in this theorem. 

\subsection{Function family and basic properties}

We start by defining the family of functions $\F=\{f:\R^n \to \R\}$ that we use. The function family $\F$ depends on the dimension $n$ and two other parameters $k$ and $\gamma$. 
Since the function family we choose depends on $\eps$, the parameters $n$, $k$, and $\gamma$ will be functions of $\eps$. Our choice of $n$, $k$, and $\gamma$ will become clear later, but for now we simply choose them as follows. Let
\begin{align}\label{eq:choice}
    k \defeq \frac{1}{100\eps^2} \implies \eps = \frac{1}{10\sqrt{k}}\quad \mathrm{and} \quad \gamma \defeq \frac{1}{10k^{3/2}} = 100\eps^3. 
\end{align}
We choose $n$ such that it satisfies 
\begin{equation}\label{eq:gamma}
    \gamma \geq 8\sqrt{\frac{\log n}{n}} \implies n \defeq O\(\frac{\log(1/\eps)}{\eps^6} \) = \tO\(\frac{1}{\eps^6}\).
\end{equation}
The discussion before \Cref{lem:propf} explains the choice of $k$ and the discussion after \Cref{lem:propmin} explains the choice of $\gamma$. For the dimension $n$, see the discussion at the beginning of \Cref{sec:prob}.

We now define the function family for these specific choices of $n$, $k$, and $\gamma$.

\begin{definition}[Hard function family]\label{def:function}
    Let $\mV = \{(v_1,\ldots,v_k)  \mid \forall i,j, \in [k], \<v_i,v_j\> = \delta_{ij} \}$ be the set of all $k$-tuples of orthonormal vectors in $\R^n$. Let the family of functions $\mF = \{f_{V}\}_{V \in \mV}$  be defined as
    \begin{align}
        f_{(v_1,v_2,\dots,v_k)}(x) & \defeq \max_{i \in [k]} \bigl\{g_V^{(i)}(x)\bigr\}, \text{ where } g_V^{(i)}(x) \defeq \dotp{v_i}{x} + (k - i)\gamma \|x\|.
%        & = \max \bigl\{\dotp{v_1}{x} + (k - 1)\gamma \|x\|, \ldots, \dotp{v_{k-1}}{x} + \gamma \|x\|, \dotp{v_k}{x} \bigr\}.
    \end{align}
\end{definition}

We will show that any quantum algorithm that solves \Cref{prob:FOCM} on the functions in this family must make $k$ queries. As we will prove, informally what happens is each query of the quantum algorithm to the gradient oracle only reveals a single direction $v_i$ to the algorithm. 
In fact, with very high probability the vectors are revealed in order, so that the algorithm first learns $v_1$, then $v_2$, and so on. 
As we will show in \Cref{lem:propmin}, any $\eps$-optimal solution must overlap significantly with all $v_i$, and thus any quantum algorithm must make $k$ queries. 
Since we want to show an $\Omega(1/\eps^2)$ bound, we choose $k$ to be a small multiple of $1/\eps^2$, which explains our choice for $k$ in \cref{eq:choice}. 

We now establish some basic properties of these functions.

\begin{lemma}[Properties of $f_V$]\label{lem:propf}
    For any $V \in \mV$, let $f_V$ and $g_V^{(i)}$ be as in \Cref{def:function}. 
    Then $f_V$ is convex with Lipschitz constant at most $1+k\gamma \leq 2$ on $B(\vec 0,1)$, and
    \begin{align}        
        &\text{for } x \neq \vec 0, \quad  \nabla \smash{g_V^{(i)}} (x) = v_i + (k-i)\gamma {x}/{\norm{x}}, \text{ and}\\
        &\text{for } x = \vec 0, \quad  \partial g_V^{(i)}(\vec 0) = \{v_i + (k-i)\gamma u \mid u \in B(\vec 0,1)\}, \text{ and}\\
        &\text{for any $x$}, \quad  \partial f_V(x) = \mathrm{ConvexHull}\bigl(\{u \in \partial g_V^{(i)}(x) \mid g_V^{(i)}(x)=f_V(x)\}\bigr),
    \end{align}
    where the convex hull of a set of vectors is the set of all convex combinations of vectors in the set.
    Lastly, for any $\alpha>0$, $f_V(\alpha x) = \alpha f(x)$ and $\partial f_V(\alpha x) = \partial f_V(x)$.
\end{lemma}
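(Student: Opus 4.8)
The plan is to verify the five claims in order, obtaining convexity and the Lipschitz bound as consequences of the (sub)gradient formulas rather than proving them independently.

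First I would record the elementary facts about each piece $g_V^{(i)}$. It is convex, being the sum of the linear map $x \mapsto \dotp{v_i}{x}$ and the nonnegative multiple $(k-i)\gamma \geq 0$ of the Euclidean norm $x \mapsto \norm{x}$; hence $f_V = \max_{i \in [k]} g_V^{(i)}$ is convex as a pointwise maximum of finitely many convex functions. For $x \neq \vec 0$ the norm is differentiable with $\nabla \norm{x} = x/\norm{x}$ (a one-line computation from $\norm{x} = (\sum_j x_j^2)^{1/2}$) and $\nabla \dotp{v_i}{x} = v_i$, which gives $\nabla g_V^{(i)}(x) = v_i + (k-i)\gamma\, x/\norm{x}$. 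At $x = \vec 0$ I would use the standard fact that the subdifferential of a norm at the origin is its dual unit ball, so $\partial \norm{\cdot}(\vec 0) = B(\vec 0,1)$ for $\ell_2$; since the subdifferential is additive over sums and scales under multiplication by a nonnegative scalar, $\partial g_V^{(i)}(\vec 0) = \{v_i + (k-i)\gamma u \mid u \in B(\vec 0,1)\}$, as claimed.

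For $\partial f_V(x)$ I would invoke the classical max-rule for subdifferentials: the subdifferential of a pointwise maximum of finitely many finite convex functions on $\R^n$ at a point $x$ is the convex hull of the union of the subdifferentials of the pieces that are active at $x$, i.e.\ those $i$ with $g_V^{(i)}(x) = f_V(x)$ (a classical fact; see, e.g., \cite{Nes04}); no constraint-qualification issue arises because every function here is finite and continuous on all of $\R^n$. The Lipschitz bound then follows immediately: every subgradient of $g_V^{(i)}$ at any point has norm at most $\norm{v_i} + (k-i)\gamma = 1 + (k-i)\gamma \leq 1 + k\gamma$ (at the origin using $\norm{v_i + (k-i)\gamma u} \leq 1 + (k-i)\gamma$ for $\norm{u}\leq 1$), and since $\partial f_V(x)$ consists of convex combinations of such vectors the same bound holds for every $g \in \partial f_V(x)$; substituting $\gamma = 1/(10k^{3/2})$ from \eqref{eq:choice} gives $1 + k\gamma = 1 + 1/(10\sqrt{k}) \leq 2$.

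Finally, for positive homogeneity: $g_V^{(i)}(\alpha x) = \alpha \dotp{v_i}{x} + (k-i)\gamma\,\alpha \norm{x} = \alpha\, g_V^{(i)}(x)$ for $\alpha > 0$, hence $f_V(\alpha x) = \alpha f_V(x)$, and in particular the set of active indices at $\alpha x$ coincides with that at $x$. Combining this with $\partial g_V^{(i)}(\alpha x) = \partial g_V^{(i)}(x)$ — obvious for $x = \vec 0$, and for $x \neq \vec 0$ because $x/\norm{x} = \alpha x/\norm{\alpha x}$ — the max-rule yields $\partial f_V(\alpha x) = \partial f_V(x)$. The only point that needs any care is $x = \vec 0$, where none of the $g_V^{(i)}$ is differentiable and all $k$ indices are simultaneously active (so the subdifferential is a full convex hull of shifted balls); everywhere else the statements reduce to routine calculus, with the finite max-rule for subdifferentials as the single external ingredient.
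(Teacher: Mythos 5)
Your proposal is correct and follows essentially the same route as the paper: convexity of each $g_V^{(i)}$ as linear plus a nonnegative multiple of the norm, the subdifferential sum rule and the fact that $\partial\norm{\cdot}(\vec 0)=B(\vec 0,1)$, the finite max-rule for the active pieces, the Lipschitz bound via the norm bound $1+(k-i)\gamma\leq 1+k\gamma\leq 2$ on all subgradients, and positive homogeneity of both the values and the subdifferentials. Nothing further is needed.
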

\begin{proof}
For all $V \in \mV$,  $f_V:\R^n \to \R$ is convex. This follows because linear functions and norms are convex functions~\cite[Example 3.1.1]{Nes04}, and the sum or maximum of convex functions is convex~\cite[Theorem 3.1.5]{Nes04}.

Let us now compute the subgradients of $g_V^{(i)}(x) =  \dotp{v_i}{x} + (k - i)\gamma \|x\|$. The linear function $\dotp{v_i}{x}$ is differentiable and its gradient is simply $v_i$. The Euclidian norm $\norm{x}$ is differentiable everywhere except at $x = \vec 0$. At $x\neq \vec 0$, the gradient of $\norm{x}$ is $x/\norm{x}$ and at $x=0$, the set of subgradients is $B(\vec 0,1)$~\cite[Example 3.1.5]{Nes04}. We also know that $\partial (\alpha_1 f_1(x) + \alpha_2 f_2(x)) = \alpha_1 \partial f_1(x) + \alpha_2 \partial f_2(x)$~\cite[Lemma 3.1.9]{Nes04}, which gives us the expressions for the subgradients of $g_V^{(i)}$.

For a function that is the maximum of functions $g_V^{(i)}$, we know that the set of subgradients is simply the convex hull of subgradients of those $g_V^{(i)}$ which achieve the maximum at the given point $x$~\cite[Lemma 3.1.10]{Nes04}.

The Lipschitz constant of a function is the maximum norm of any subgradient of the function. Since any vector in $\partial g_V^{(i)}$ has norm $1+k\gamma$, and any vector in $\partial f_V$ is the convex combination of vectors with norm at most $1+k\gamma$, the Lipschitz constant of $f_V$ is at most $1+k\gamma \leq 2$.

Finally, it is easy to see from the definition of $f_V$ that for $\alpha>0$, $f_V(\alpha x) = \alpha f(x)$ since each term in the max gets multiplied by $\alpha$. For $\partial f_V(\alpha x)$, note that this is a convex combination of $\partial g_V^{(i)}(\alpha x)$, and these do not depend on $\alpha$.
\end{proof}

For convenience we work with this family of functions with Lipschitz constant at most $2$ instead of $1$, which doesn't change the asymptotic bounds since we could just divide every function $f_V$ by $2$.

The last property essentially says that querying the function or its subgradient on a scalar multiple of a vector $x$ gives us only as much information as querying it on $x$. Thus we can assume that an algorithm only queries the oracles within the unit ball without loss of generality.

Now let us discuss the vector $x^* \in B(\vec 0,1)$ that minimizes $f_V(x)$ and vectors that $\eps$-approximately solve the minimization problem. First note that if $\gamma$ were equal to $0$, then the function would simply be $\max_{i \in [k]} \<v_i,x\>$, which requires us to minimize the component of $x$ in $k$ different directions subject to it being a unit vector. The solution to this is simply $\frac{-1}{\sqrt{k}} \sum_i  v_i$. Now $-1/\sqrt{k} = -10\eps$, so the overlap of $x$ with each direction $v_i$ is a large multiple of $\eps$. So even an $\eps$-approximate solution must have reasonable overlap with each of the vectors $v_i$. Specifically, each overlap must be at least $-9\eps$. Now in our function $f_V$ the term $\gamma$ is not $0$, but that term at most perturbs the function by $k\gamma = \eps$, which again is much smaller than $10\eps$, and thus even approximate solutions must have significant overlaps with all $v_i$. We formalize these properties below.

\begin{lemma}[Properties of the minimum]\label{lem:propmin}
    For any $V \in \mV$, let $f_V:\R^n \to \R$ be the function in \Cref{def:function} and let  $x^* \defeq \argmin_{x \in B(\vec 0,1)} f_V(x)$. Then $f_V(x^*) \leq -9\eps$. Furthermore, any $x\in \R^n$ that satisfies $|f_V(x)-f_V(x^*)|\leq \eps$ must satisfy for all $i \in [k]$, $\<v_i,x\> \leq -8\eps$.    
\end{lemma}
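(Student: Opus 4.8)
The plan is to prove the two claims separately, and both follow quickly from the definition of $f_V$ together with the numerical relations $1/\sqrt{k} = 10\eps$ and $k\gamma = \eps$ coming from \eqref{eq:choice}.

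For the bound $f_V(x^*) \leq -9\eps$, I would exhibit an explicit feasible point and bound its value. Take $x_0 \defeq -\frac{1}{\sqrt{k}}\sum_{i\in[k]} v_i$. Since the $v_i$ are orthonormal, $\norm{x_0} = 1$, so $x_0 \in B(\vec 0,1)$, and $\dotp{v_i}{x_0} = -1/\sqrt{k}$ for every $i$. Hence $g_V^{(i)}(x_0) = -1/\sqrt{k} + (k-i)\gamma\norm{x_0} \leq -1/\sqrt{k} + k\gamma = -10\eps + \eps = -9\eps$ for all $i$, so $f_V(x_0) = \max_{i\in[k]} g_V^{(i)}(x_0) \leq -9\eps$. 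Since $x^*$ minimizes $f_V$ over $B(\vec 0,1)$ and $x_0$ lies in that ball, $f_V(x^*) \leq f_V(x_0) \leq -9\eps$.

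For the second claim, suppose $x \in \R^n$ satisfies $|f_V(x) - f_V(x^*)| \leq \eps$; in particular $f_V(x) \leq f_V(x^*) + \eps \leq -8\eps$. Fix any $i \in [k]$. Since $(k-i)\gamma \geq 0$ and $\norm{x} \geq 0$, the norm term in $g_V^{(i)}$ is nonnegative, so $\dotp{v_i}{x} \leq \dotp{v_i}{x} + (k-i)\gamma\norm{x} = g_V^{(i)}(x) \leq f_V(x) \leq -8\eps$, which is exactly the asserted inequality.

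There is no real obstacle here: the only point requiring care is bookkeeping of the constants, namely confirming that the perturbation $(k-i)\gamma\norm{x} \leq k\gamma = \eps$ introduced by the norm terms is small compared to the main overlap $1/\sqrt{k} = 10\eps$ — which is precisely what the choice of $k$ and $\gamma$ in \eqref{eq:choice} guarantees. (Note that the second part does not actually use $\norm{x}\le 1$; it holds for every $x\in\R^n$, since only nonnegativity of the norm terms is needed.)
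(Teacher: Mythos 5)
Your proof is correct and follows essentially the same route as the paper: exhibit the point $-\frac{1}{\sqrt{k}}\sum_i v_i$ and use $1/\sqrt{k}=10\eps$, $k\gamma=\eps$ for the first claim, then observe $\dotp{v_i}{x}\leq g_V^{(i)}(x)\leq f_V(x)\leq -8\eps$ for the second. The only (harmless) cosmetic difference is that you argue the second part directly while the paper phrases it as a contradiction, and your remark that it holds for all $x\in\R^n$ matches the lemma as stated.
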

\begin{proof}
    Consider the vector $y= \frac{-1}{\sqrt{k}} \sum_{i \in [k]} v_i$. This is a vector in $B(\vec 0,1)$, satisfying $f_V(y) \leq \frac{-1}{\sqrt{k}} + (k-1)\gamma \leq \frac{-1}{\sqrt{k}} + k\gamma = -10\eps + \eps = -9\eps$, because we have $10\eps = \frac{1}{\sqrt{k}}$ and $k\gamma = \frac{1}{10\sqrt{k}} = \eps$. Thus $f_V(x^*) \leq f_V(y) \leq -9\eps$.

    Now consider any vector $x$ with $|f_V(x)-f_V(x^*)|\leq \eps$, which implies $f_V(x) \leq -8\eps$. If $\<v_i,x\> > -8\eps$ for any $i\in[k]$, then $f_V(x) \geq \<v_i,x\> + (k-i)\gamma \norm{x} > -8\eps$, which is a contradiction.
\end{proof}

This result crucially uses the relation between $\gamma$ and $k$ and because we want $k\gamma$ to be a constant factor (say $10$) smaller than $\sqrt{1/k}$, this informs our choice of $\gamma$ in \cref{eq:choice}. Our choice of $n$ in \cref{eq:gamma} will be discussed in the next section.

\subsection{Probabilistic facts about the function family}
\label{sec:prob}

So far all the properties we have discussed of our function family hold for any $V \in \mV$, but now we want to talk about a hard distribution over such functions. Specifically we want to talk about choosing a uniformly random (according to the Haar measure) $V$ from the infinite set $\mV$. 
It is easy to see how to sample a random $V$ once we can sample unit vectors from a subspace.
We start by choosing $v_1$ to be a Haar random unit vector from $\R^n$, let $v_2$ be a Haar random unit vector from $\spn(v_1)^{\perp}$, and so on, until $v_k$ is a Haar random unit vector in $\spn(v_1, v_2,\dots, v_{k-1})^{\perp}$. 
In the following, to improve readability, we will use boldface to denote random variables.

We can now discuss what determines our choice of $n$. By construction, the family of functions $\F$ has the property that if the input vector $x$ has equal inner product with all vectors $v_i$, then the maximum will be achieved uniquely on the first term $i=1$ because the additive term $(k-i)\gamma\norm{x}$ is largest for $i=1$. Now what we want to ensure is that this property holds even when $x$ does not have equal inner product with all $v_i$, but $x$ is chosen uniformly at random from $B(\vec 0,1)$. Or equivalently, we want this property to hold when $x$ is fixed, but the set $V$ is chosen uniformly at random. 

In either case, the inner product of $x$ with a random unit vector $v$ will be a random variable with mean $0$ due to symmetry. But the expected value of $|\<v,x\>|^2$ for a random unit vector $v$ is $1/n$, and in fact it will be tightly concentrated around $1/n$.  The following proposition follows from \cite[Lemma 2.2]{Ball97}.

\begin{proposition}\label{prop:concentration}
Let $x\in B(\vec 0, 1)$. Then for a random unit vector $v$, and all $c>0$,
    \begin{equation}
        \Pr_{v}(|\<x,v\>| \geq c) \leq 2e^{-nc^2/2}.
    \end{equation}
\end{proposition}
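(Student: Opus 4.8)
The plan is to reduce the statement to a standard estimate on the normalized surface measure of a spherical cap, which is precisely the content of \cite[Lemma 2.2]{Ball97}. First I would note that it suffices to treat the case $\norm{x}=1$. Indeed, if $x=\vec 0$ the claim is trivial, and otherwise $\langle x,v\rangle = \norm{x}\,\langle x/\norm{x},\,v\rangle$ with $\norm{x}\le 1$, so $\Pr_v(|\langle x,v\rangle|\ge c)\le \Pr_v(|\langle \hat x,v\rangle|\ge c)$ for the unit vector $\hat x \defeq x/\norm{x}$. Thus the worst case is a unit vector, and it is enough to bound the probability there.

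Next I would use rotational invariance. Since $v$ is distributed according to the Haar (uniform) measure on the sphere $S^{n-1}$, for any orthogonal $U$ we have $Uv$ equal in law to $v$; choosing $U$ with $U\hat x = e_1$ gives $\langle \hat x, v\rangle = \langle e_1, Uv\rangle = (Uv)_1$, which has the same distribution as $v_1$, the first coordinate of a uniformly random unit vector. So the quantity to be bounded, $\Pr_v(|v_1|\ge c)$, is exactly twice the normalized $(n-1)$-dimensional surface measure of the cap $\{v\in S^{n-1}: v_1\ge c\}$, and \cite[Lemma 2.2]{Ball97} bounds this cap measure by $e^{-nc^2/2}$; combining with the two-sided union bound yields the factor $2$ and the claimed inequality.

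If one prefers a self-contained argument to citing Ball, the cap measure admits a short direct estimate: the density of $v_1$ on $[-1,1]$ is proportional to $(1-t^2)^{(n-3)/2}$, so the tail is $\int_c^1 (1-t^2)^{(n-3)/2}\,dt$ divided by the normalizing integral; bounding $(1-t^2)^{(n-3)/2}\le e^{-(n-3)t^2/2}$ in the numerator and lower-bounding the normalizer by an absolute constant gives the same exponential decay. Equivalently, writing $v = g/\norm{g}$ for a standard Gaussian $g$ turns the bound into the chi-squared comparison $\Pr\bigl((1-c^2)g_1^2 \ge c^2\sum_{i\ge 2} g_i^2\bigr)$, dispatched by a routine Chernoff/MGF computation. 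I do not expect a genuine obstacle here; the only care needed is bookkeeping of constants so that the exponent comes out as exactly $nc^2/2$ and the prefactor as $2$, and making sure the first reduction correctly handles the case $\norm{x}<1$.
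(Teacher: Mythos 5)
Your proposal is correct and follows essentially the same route as the paper, which gives no separate argument and simply notes that the proposition follows from \cite[Lemma 2.2]{Ball97}; your reduction to $\norm{x}=1$, the rotational-invariance step, and the factor-of-$2$ union bound over the two caps are exactly the bookkeeping implicit in that citation. The alternative self-contained estimates you sketch are fine as backup but are not needed.
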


We choose $\gamma$ so that it is very unlikely (polynomially small in $n$) that the maximum is not achieved at $i=1$. From \Cref{prop:concentration}, we see that the probability of any $|\<v_i,x\>|^2$ being larger than a constant multiple of $\log n/n$ is inverse polynomially small. So it is sufficient to take $\gamma^2$ to be a large constant multiple of $\log n /n$ as in \cref{eq:gamma}.

In our lower bound we will need a slightly stronger result. We can show that if the vectors $v_1,\ldots,v_{t-1}$ are fixed (and hence known to the algorithm), and the remaining vectors $v_t, \ldots, v_k$ are chosen uniformly at random such that the set of vectors $\{v_1,\ldots,v_k\}$ is orthonormal, then the maximum will be achieved in the set $[t]$ with high probability. This generalizes the previous claim, which is the case of $t=1$, where none of the vectors were fixed.

\begin{lemma}[Most probable argmax]\label{lem:argmax}
    Let $1\leq t \leq k$ be integers and $\{v_1,\ldots,v_{t-1}\}$ be a set of orthonormal vectors. Let $\{v_t,\ldots,v_k\}$ be chosen uniformly at random so that the set $\{v_1,\ldots,v_k\}$ is orthonormal. Then 
\begin{equation}
    \forall x \in B(\vec 0,1): \Pr_{v_t,\ldots,v_k} 
    \left( \max_{i \in [k]} \dotp{{v_i}}{x} + (k-i)\gamma\|x\| \neq \max_{i \in [t]} \dotp{{v_i}}{x} + (k-i)\gamma\|x\| \right) \leq \frac{1}{n^{7}}.
\end{equation}    
\end{lemma}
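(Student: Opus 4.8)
\textbf{Proof plan for \Cref{lem:argmax}.}
The plan is to reduce the event in question to a union of tail events for the random inner products $\dotp{v_i}{x}$ with $i \geq t$, and then apply \Cref{prop:concentration} together with a union bound. The key observation is the following: the maximum over $[k]$ differs from the maximum over $[t]$ only if some term with index $i \geq t+1$ strictly exceeds all terms with index in $[t]$. In particular, it must exceed the $i=t$ term, so we need $\dotp{v_i}{x} + (k-i)\gamma\norm{x} > \dotp{v_t}{x} + (k-t)\gamma\norm{x}$ for some $i > t$, which rearranges to $\dotp{v_i}{x} - \dotp{v_t}{x} > (i-t)\gamma\norm{x} \geq \gamma\norm{x}$. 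Since $\norm{x} \leq 1$, this is already a fairly strong condition, but it mixes two of the random vectors; to get a clean bound it is cleaner to instead note that the bad event is contained in the event that $\dotp{v_i}{x} \geq \dotp{v_t}{x} + \gamma\norm{x}$ for some $i > t$, hence (using $\dotp{v_t}{x} \geq -\norm{v_t}\norm{x} = -\norm{x}$, but that is too lossy) --- the cleanest route is: if the argmax lies outside $[t]$, then for the maximizing index $i^\star > t$ we have in particular $\dotp{v_{i^\star}}{x} + (k-i^\star)\gamma\norm{x}$ exceeds the $i=1$ term's lower bound is awkward, so instead compare to index $t$ directly as above and split $\dotp{v_i}{x} - \dotp{v_t}{x}$ into $|\dotp{v_i}{x}| + |\dotp{v_t}{x}|$.

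Concretely, the bad event implies that there exists $i \in \{t, t+1, \ldots, k\}$ with $|\dotp{v_i}{x}| \geq \gamma\norm{x}/2$: indeed if $|\dotp{v_i}{x}| < \gamma\norm{x}/2$ for all $i \geq t$, then for any $i > t$ we have $\dotp{v_i}{x} + (k-i)\gamma\norm{x} < \gamma\norm{x}/2 + (k-i)\gamma\norm{x} \leq (k - t - 1/2)\gamma\norm{x} < (k-t)\gamma\norm{x} - \gamma\norm{x}/2 < \dotp{v_t}{x} + (k-t)\gamma\norm{x}$, so the $i=t$ term already beats every term with index $> t$, meaning the argmax lies in $[t]$. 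Wait --- this needs $\dotp{v_t}{x} > -\gamma\norm{x}/2$, which holds under the same assumption. So the bad event is contained in $\bigcup_{i=t}^{k} \{|\dotp{v_i}{x}| \geq \gamma\norm{x}/2\}$. If $\norm{x} = 0$ the bad event is empty (all terms equal $0$), so assume $\norm{x} > 0$; then $\{|\dotp{v_i}{x}| \geq \gamma\norm{x}/2\} \subseteq \{|\dotp{v_i}{x/\norm{x}}| \geq \gamma/2\}$, and $x/\norm{x} \in B(\vec 0,1)$.

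The next step is to bound $\Pr(|\dotp{v_i}{\hat x}| \geq \gamma/2)$ for each fixed $i \geq t$, where $\hat x = x/\norm{x}$ is a fixed unit vector. Here $v_i$ is a Haar-random unit vector in the orthogonal complement of $\spn(v_1,\ldots,v_{i-1})$, which is a subspace of dimension $n - i + 1 \geq n - k + 1$. Projecting $\hat x$ onto this subspace can only shrink its norm, so $v_i$ restricted to that subspace is a Haar-random unit vector in a space of dimension $\geq n-k+1 \geq n/2$ (using $k \ll n$), and \Cref{prop:concentration} applied within this subspace gives $\Pr(|\dotp{v_i}{\hat x}| \geq \gamma/2) \leq 2 e^{-(n-k+1)(\gamma/2)^2/2} \leq 2e^{-n\gamma^2/16}$. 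By our choice of $\gamma$ in \cref{eq:gamma} we have $\gamma^2 \geq 64 \log n / n$, so this is at most $2 e^{-4\log n} = 2/n^4$. A union bound over the at most $k \leq n$ indices $i \in \{t,\ldots,k\}$ gives a total probability at most $2k/n^4 \leq 2/n^3 \leq 1/n^7$... which does not quite work for the stated exponent. To land on the exponent $7$ I would instead demand $\gamma^2 \geq C \log n / n$ for a slightly larger absolute constant $C$ (the $8$ in \cref{eq:gamma} already gives $\gamma^2 \geq 64\log n/n$, and one can be a little more careful: $\Pr(|\dotp{v_i}{\hat x}| \geq \gamma/2) \leq 2e^{-(n/2)(\gamma^2/4)/2} = 2e^{-n\gamma^2/16} \leq 2 n^{-4}$ --- to reach $n^{-7}$ after the union bound one needs $n\gamma^2/16 \geq 8\log n$, i.e. $\gamma^2 \geq 128\log n/n$; so either the constant $8$ in \cref{eq:gamma} should be read as giving room to spare, or one absorbs the discrepancy into the $O(\cdot)$ in the definition of $n$). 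In any case the structure is: bad event $\subseteq$ union of $O(k)$ tail events, each of probability $n^{-\Omega(1)}$ by concentration of a random unit vector's coordinate, and the choice of $\gamma$ relative to $\log n / n$ is exactly what makes the sum $\leq n^{-7}$.

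The main obstacle I anticipate is the conditioning: $v_i$ for $i \geq t$ is not a Haar-random unit vector in all of $\R^n$ but only in a random $(n-i+1)$-dimensional subspace that depends on $v_1,\ldots,v_{i-1}$. The fix is the standard one --- condition on $v_1,\ldots,v_{i-1}$, observe that conditionally $v_i$ is Haar-random in a fixed subspace $W$ of dimension $n-i+1$, and that $\dotp{v_i}{\hat x} = \dotp{v_i}{\Pi_W \hat x}$ where $\Pi_W$ is the orthogonal projection onto $W$; since $\norm{\Pi_W \hat x} \leq 1$, \Cref{prop:concentration} (which is stated for a unit vector $v$ and any $x \in B(\vec 0,1)$, and is dimension-explicit) applies in $W$ with ambient dimension $n-i+1$ in place of $n$. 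Because this conditional bound holds for every realization of $v_1,\ldots,v_{i-1}$, it holds unconditionally, and the union bound over $i$ goes through without any independence being needed. The second minor point to be careful about is the degenerate case $x = \vec 0$ (handled above: all terms are $0$ and the two maxima trivially agree), and the reduction to a unit vector $\hat x$, which only uses positive homogeneity of everything in sight.
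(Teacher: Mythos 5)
Your proposal is correct and follows essentially the same route as the paper: the bad event forces $|\dotp{v_i}{x}| \geq \gamma\norm{x}/2$ for some $i \in \{t,\ldots,k\}$ (otherwise the terms from index $t$ onward are strictly decreasing and the max stays in $[t]$), after which concentration of a Haar-random unit vector in the conditional orthogonal complement plus a union bound finishes the job. The only divergence is bookkeeping: where you lose a factor of two by replacing the dimension $n-i+1$ with $n/2$ and land near $1/n^{3}$, the paper keeps the dimension $n-t+1 \geq 3n/4$ (using $n > 4t$) and bounds each tail by $2e^{-(n-t+1)\gamma^2/8} \leq 2/n^{8}$, so the union over at most $k < n/2$ indices already yields the stated $1/n^{7}$ without enlarging the constant in the choice of $\gamma$.
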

\begin{proof}
    Let $E_x$ denote the event whose probability we want to upper bound. Since $E_x$ and $E_{\alpha x}$, for any $\alpha \in [0,1]$, are the same event, we can assume without loss of generality that $\norm{x}=1$.
        If event $E_x$ occurs, then it must hold that
    \begin{equation}
        \max_{i \in \{t+1,\ldots,k\}} \dotp{{v_i}}{x} + (k-i)\gamma > 
        \max_{i \in [t]} \dotp{{v_i}}{x} + (k-i)\gamma \geq 
        \dotp{{v_t}}{x} + (k-t)\gamma. \label{eq:Ex}
    \end{equation} 
    We want to show that this event is very unlikely. To do so, let $F_x$ be the event that for all $i \in \{t,\ldots,k\}$, $\dotp{{v_i}}{x} \in [-\frac{\gamma}{2},+\frac{\gamma}{2}]$. Note that if $F_x$ occurs, then the terms in the max are in decreasing order, and we have 
    \begin{equation}
        \dotp{{v_t}}{x} + (k-t)\gamma \geq \dotp{{v_{t+1}}}{x} + (k-t-1)\gamma \geq \cdots \geq \dotp{{v_{k-1}}}{x} + \gamma  \geq \dotp{{v_k}}{x},
    \end{equation}
    which contradicts \cref{eq:Ex}. Thus if $E_x$ holds then the complement of $F_x$, $\bar{F_x}$ must hold, which means $\Pr(E_x) \leq \Pr(\bar{F_x})$. So let us show that $F_x$ is very likely.
    
    The event $\bar{F_x}$ holds only if there exists an $i \in \{t,\ldots,k\}$ such that $\dotp{{v_i}}{x} \notin [-\frac{\gamma}{2},+\frac{\gamma}{2}]$. We can upper bound this probability for any particular $i \in \{t,\ldots,k\}$ using \Cref{prop:concentration} and the fact that $v_i$ is chosen uniformly at random from an $n-t+1$-dimension ball. This probability is at most $2e^{-(n-t+1)\gamma^2/8} = 2e^{-(n-t+1) \cdot 8 \frac{\log n}{n}} \leq 2 \cdot 2^{-8 \log n}= 2/n^{8}$, with the inequality holding because $n > 4t$. The probability that this happens for any $i$ is at most $(k-t+1) \leq k$ times this probability, by the union bound. Using the fact that $2k < n$, we get that $\Pr(E_x) \leq \Pr(\bar{F_x}) < 1/n^{7}$.
\end{proof}
Finally, we show that even if we knew the vectors $v_1,\ldots,v_{k-1}$, we cannot guess a vector $x$ that is an $\eps$-approximate solution to our problem, because it won't have enough overlap with $v_k$, which is unknown. In other words, for an algorithm to output an $\eps$-optimal solution, it essentially must know the entire set $V$.

\begin{lemma}[Cannot guess $x^*$]\label{lem:guess}
    Let $k>0$ be an integer and $\{v_1,\ldots,v_{k-1}\}$ be a set of orthonormal vectors. Let $v_k$ be chosen uniformly at random from $\spn(v_1,\ldots,v_{k-1})^{\perp}$ and let $V=(v_1,\ldots,v_k)$. Then
\begin{equation}
    \forall x \in B(\vec 0,1): \Pr_{v_k} 
    \left(f_V(x) - f_V(x^*) \leq \eps \right) \leq 2e^{-\Omega(k^2)}.
\end{equation}    
\end{lemma}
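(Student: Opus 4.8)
The plan is to reduce the event $\{f_V(x) - f_V(x^*) \leq \eps\}$ to a statement about the single inner product $\dotp{v_k}{x}$, and then apply the concentration bound of \Cref{prop:concentration} inside the subspace from which $v_k$ is drawn.

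First I would observe that since $x^* = \argmin_{x\in B(\vec 0,1)} f_V(x)$, we have $f_V(x) - f_V(x^*)\geq 0$ for all $x\in B(\vec 0,1)$, so the event in the statement is the same as $|f_V(x) - f_V(x^*)|\leq \eps$. Hence \Cref{lem:propmin} applies: whenever this event occurs, $\dotp{v_i}{x}\leq -8\eps$ for every $i\in[k]$, and in particular $\dotp{v_k}{x}\leq -8\eps$, so $|\dotp{v_k}{x}|\geq 8\eps$. It therefore suffices to upper bound $\Pr_{v_k}\!\big(|\dotp{v_k}{x}|\geq 8\eps\big)$.

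Next, $v_k$ is a Haar-random unit vector in the $(n-k+1)$-dimensional subspace $W \defeq \spn(v_1,\ldots,v_{k-1})^{\perp}$. Writing $P$ for the orthogonal projection onto $W$, we have $\dotp{v_k}{x} = \dotp{v_k}{Px}$ and $\|Px\|\leq\|x\|\leq 1$, so applying \Cref{prop:concentration} inside $W$ (identified with $\R^{n-k+1}$, with the point $Px\in B(\vec 0,1)$ and constant $c = 8\eps$) gives
\begin{equation}
  \Pr_{v_k}\!\big(|\dotp{v_k}{x}|\geq 8\eps\big) \;\leq\; 2e^{-(n-k+1)(8\eps)^2/2} \;=\; 2e^{-32(n-k+1)\eps^2}.
\end{equation}
Finally I would convert this into the claimed bound using \cref{eq:choice} and \cref{eq:gamma}: since $n = \tOmega(1/\eps^6)$ while $k = 1/(100\eps^2)$, for small $\eps$ we have $k\leq n/2$, hence $n-k+1\geq n/2$ and $32(n-k+1)\eps^2 \geq 16n\eps^2 = \tOmega(1/\eps^4) = \Omega(k^2)$, because $k = \Theta(1/\eps^2)$. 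This yields $\Pr_{v_k}(f_V(x)-f_V(x^*)\leq\eps)\leq 2e^{-\Omega(k^2)}$.

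The only real subtlety is the opening reduction — confirming that we may invoke \Cref{lem:propmin} verbatim — together with the bookkeeping in the last step to check that the chosen dimension $n$ is large enough that the exponent $(n-k+1)\eps^2$ dominates $k^2$; everything else is a direct application of the concentration inequality to the component of $x$ in the random subspace.
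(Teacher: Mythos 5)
Your proposal is correct and follows essentially the same route as the paper's proof: invoke \Cref{lem:propmin} to reduce the event to $\dotp{v_k}{x} \leq -8\eps$, project $x$ onto $\spn(v_1,\ldots,v_{k-1})^{\perp}$, and apply \Cref{prop:concentration} to obtain $2e^{-32(n-k+1)\eps^2} \leq 2e^{-\Omega(k^2)}$. The only additions are your (correct) remarks that nonnegativity of $f_V(x)-f_V(x^*)$ lets you apply \Cref{lem:propmin} verbatim and the explicit check that $n\eps^2 = \Omega(k^2)$, both of which the paper leaves implicit.
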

\begin{proof}
From \Cref{lem:propmin}, we know that an $\eps$-optimal solution $x$ must satisfy $\<v_k,x\> \leq -8\eps$. But $v_k$ is chosen uniformly at random from the space $\spn(v_1,\ldots,v_{k-1})^{\perp}$ and any vector $x \in B(\vec 0 , 1)$ projected to that space also has length at most $1$. So from \Cref{prop:concentration} we know that for any $x \in B(\vec 0 , 1)$,
\begin{equation}
    \Pr_{v_k} (\<v_k,x\> \leq -8\eps) \leq \Pr_{v_k} (|\<v_k,x\>| \geq 8\eps) \leq 2e^{-32(n-k+1)\eps^2} \leq 2e^{-\Omega(k^2)}. \qedhere
\end{equation}
\end{proof}

\subsection{Quantum query model} 

We now formally define the quantum query model in our setting. In the usual quantum query model the set of allowed queries is finite, whereas in our setting it is natural to allow the quantum algorithm to query the oracles at any point $x \in \R^n$. Due to \Cref{lem:propf}, it is sufficient to allow the algorithm to query any $x \in B(\vec 0,1)$, but this is still a continuous space of queries, and hence a query vector could be a superposition over infinitely many states. Instead of formalizing this notion of quantum algorithms, we allow the algorithm to make discrete queries only, but to arbitrarily high precision. The reader is encouraged to not get bogged down by details and to think of the registers as storing the real values that they ideally should, but in the rest of this section we define these algorithms more carefully so that all the spaces involved are finite and well defined. This formalization is not specific to the quantum setting and is done classically as well if we do not want to manipulate real numbers as atomic objects.

All the real numbers that appear will be represented using some $b$ bits of precision, where $b$ can be chosen by the algorithm. The reader should imagine $b$ being arbitrarily large, say exponentially larger than all the parameters involved in the problem, so that the inaccuracy involved by using this representation is negligible. Then the algorithm represents the input $x \in B(\vec 0,1)$ using $b$ bits of precision per coordinate. The oracle's response will also use $b$ bits of precision per real number. For a given choice of $b$, the quantum algorithm will have some probability of success of solving the problem at hand. We then define the success probability of quantum algorithms that make $q$ queries by taking a supremum over all $b$ of $q$-query algorithms that solve the problem.

We can now define the oracles more precisely. Classically, the function oracle for a function $f:\R^n \to \R$ would simply implement the map $x \mapsto f(x)$, where we represent each entry of $x$ and the output $f(x)$ using $b$ bits, so $x\in \B^{bn}$ and $f(x)\in \B^b$. Let's say we have a classical circuit that implements this map using $G$ gates, say over the gate set of AND, OR, and NOT gates. Then it is easy to construct, in a completely black-box way, a quantum circuit using $O(G)$ gates (say over the gate set of Hadamard, CNOT, and T) that performs the unitary $U|x\>|y\> = |x\>|y \oplus f(x)\>$, for every $x\in \B^{bn}$, and $y \in \B^b$. This is why it is standard to assume that the quantum oracle corresponding to the classical map $x \mapsto f(x)$ is a unitary that performs $U|x\>|y\> = |x\>|y \oplus f(x)\>$. We apply the same construction for the $\FO(f)$ oracle to get the quantum analogue of the classical map $x \mapsto g_x$, where $g_x$ is some subgradient of $f$ at $x$. Lastly, for convenience we will combine both the function and subgradient oracle into one oracle that when queried with $x$ returns $f(x)$ and a subgradient at $x$.  Since our function family is parameterized by $V \in \mV$, we call this oracle $O_V$.

Let $\mA$ be a quantum query algorithm that makes $q$ queries.  $\mA$ is described by a sequence of unitaries $U_{q}O_{V}U_{q-1}O_{V}U_{q-2}O_{V} \cdots U_{1}O_{V}U_{0}$ applied to an initial state, say $\ket{0}$. 
We assume that the output of $\mA$, which is a vector $x$, is determined by measuring the first $n$ registers storing real numbers using $b$ bits.

\subsection{Lower bound}
\label{subsec:quantumlowerbound}

We can now prove the quantum lower bound. Let $\mA$ be a $k-1$ query quantum algorithm that solves \Cref{prob:FOCM} on all the functions $f_V$ for $V \in \mV$. 
Due to \Cref{lem:propf}, we can assume that the algorithm only queries the oracles with vectors $x \in B(\vec 0,1)$.
We also need to also describe the behavior of the subgradient oracle on inputs where the subgradient is not unique. 
On such inputs $x$, the subgradient is not unique because several indices $i \in [k]$ simultaneously achieve the maximum in $f_V(x)$. 
In this case, the subgradient will answer as if the smallest index $i$ in this set achieved the maximum. Now let $\mA$ be described by the sequence of unitaries $U_{k-1}O_{V}U_{k-2}O_{V} \cdots O_VU_{1}O_{V}U_{0}$ acting on the starting state $|0\>$. Let this sequence of unitaries be called $A$. Then the final state of the algorithm is $A|0\>$.  

Recall that we defined $f_V(x) = \max_{i \in [k]} \{g_V^{(i)}(x)\}$. Let us also define functions $f_V^{(j)}$ where the maximization is only over the first $j$ indices instead of all $k$ indices. Specifically, let $f_V^{(j)} \defeq  \max_{i \in [j]} \{g_V^{(i)}(x)\}$. We previously defined the oracle $O_V$ as corresponding to the function $f_V$. Let $O_V^{(j)}$ be the oracle corresponding to the functions $f_V^{(j)}$. 

Now we define a sequence of unitaries starting with $A_0 = A$ as follows:
\begin{align}\label{eq:unitaries}
    A_0 &\defeq U_{k-1}O_{V}U_{k-2}O_{V} \cdots O_VU_{1}O_{V}U_{0} \nonumber\\
    A_1 &\defeq U_{k-1}O_{V}U_{k-2}O_{V} \cdots O_VU_{1}O^{(1)}_{V}U_{0} \nonumber \\
    A_2 &\defeq U_{k-1}O_{V}U_{k-2}O_{V} \cdots O_V^{(2)}U_{1}O^{(1)}_{V}U_{0}\\
    &\phantom{n}\vdots \nonumber\\
    A_{k-1} &\defeq U_{k-1}O_{V}^{(k-1)}U_{k-2}O_{V}^{(k-2)} \cdots O_V^{(2)}U_{1}O^{(1)}_{V}U_{0} \nonumber
\end{align}

We want to show that the algorithm $A_0$ does not solve our problem. To do so, we will employ the hybrid argument, in which we show that the output of the algorithm $A_i$ and $A_{i+1}$ is close, and thus the output of $A_0$ and $A_{k-1}$ is close. Finally, we argue that the algorithm $A_{k-1}$ does not solve our problem because the oracles in the algorithm do not know $v_k$. Let us first establish these two claims.

\begin{lemma}[$A_{k-1}$ does not solve the problem]\label{lem:Akminusone}
    Let $\mA$ be a $k-1$ query algorithm and let $A_{k-1}$ be defined as above. 
    Let $p_V$ be the probability distribution over $x \in B(\vec 0,1)$ obtained by measuring the output state $A_{k-1}|0\>$. Then $\Pr_{V, x\sim p_V} (f_V(x)-f_V(x^*)\leq \eps) \leq 2e^{-\Omega(k^2)}$.
\end{lemma}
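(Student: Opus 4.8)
The key observation is that the algorithm $A_{k-1}$ only ever applies oracles of the form $O_V^{(j)}$ for $j \leq k-1$, i.e.\ oracles that correspond to the functions $f_V^{(j)} = \max_{i \in [j]}\{g_V^{(i)}\}$, none of which involve the vector $v_k$ at all. So the plan is to condition on everything except $v_k$: fix an arbitrary realization of $v_1,\ldots,v_{k-1}$, and note that the entire unitary $A_{k-1}$ — all the $U_i$'s and all the oracles $O_V^{(1)},\ldots,O_V^{(k-1)}$ appearing in it — is then completely determined and independent of $v_k$. Consequently the output distribution $p_V$ obtained by measuring $A_{k-1}|0\rangle$ is a fixed distribution over $x \in B(\vec 0,1)$ that does not depend on the random choice of $v_k$ (conditioned on $v_1,\ldots,v_{k-1}$).

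Now I would invoke \Cref{lem:guess}. That lemma says precisely that for \emph{any} fixed $x \in B(\vec 0,1)$ (and any fixed orthonormal $v_1,\ldots,v_{k-1}$), the probability over the random choice of $v_k$ that $x$ is $\eps$-optimal for $f_V$ is at most $2e^{-\Omega(k^2)}$. Since the measurement outcome $x$ is drawn from a distribution that is independent of $v_k$, I can average: for $x \sim p_V$ with $v_k$ still random,
\[
    \Pr_{v_k,\, x \sim p_V}\bigl(f_V(x) - f_V(x^*) \leq \eps\bigr)
    = \E_{x \sim p_V}\Bigl[\Pr_{v_k}\bigl(f_V(x) - f_V(x^*) \leq \eps \,\big|\, x\bigr)\Bigr]
    \leq 2e^{-\Omega(k^2)},
\]
where in the last step I use that conditioned on $x$ (which is determined by randomness independent of $v_k$) the bound from \Cref{lem:guess} applies pointwise. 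Finally, averaging over the choice of $v_1,\ldots,v_{k-1}$ as well preserves the bound, giving $\Pr_{V, x\sim p_V}(f_V(x)-f_V(x^*)\leq\eps) \leq 2e^{-\Omega(k^2)}$.

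The only subtlety — and it is more a matter of careful bookkeeping than a genuine obstacle — is making precise the claim that $A_{k-1}$ is independent of $v_k$. One must check that every oracle call in the definition of $A_{k-1}$ in \cref{eq:unitaries} is to some $O_V^{(j)}$ with $j \leq k-1$, and that $f_V^{(j)}$ (hence its subgradient oracle, including the tie-breaking rule that picks the smallest maximizing index) depends only on $v_1,\ldots,v_j$. This is immediate from the definition $f_V^{(j)} = \max_{i\in[j]}\{g_V^{(i)}\}$ and $g_V^{(i)}(x) = \langle v_i, x\rangle + (k-i)\gamma\|x\|$. Once this is observed, the rest is just the law of total probability, so I expect no real difficulty here; the substantive content has already been front-loaded into \Cref{lem:guess}.
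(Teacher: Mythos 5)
Your proposal is correct and takes essentially the same route as the paper's proof: fix $v_1,\ldots,v_{k-1}$, observe that $A_{k-1}$ uses only the oracles $O_V^{(j)}$ with $j\leq k-1$ so its output distribution is independent of $v_k$, and then apply \Cref{lem:guess} pointwise to each possible output $x$ before averaging over $x$ and over $v_1,\ldots,v_{k-1}$. The "subtlety" you flag is handled the same way in the paper, so there is nothing missing.
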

\begin{proof}
We want to show that the probability (over the random choice of $V$ and the internal randomness of the algorithm) that $A_{k-1}$ outputs an $x$ that satisfies $f(x)-f(x^*)\leq \eps$ is very small. 

Let us establish the claim for any fixed choice of $v_1,\ldots v_{k-1}$, since if the claim holds for any fixed choice of these vectors, then it also holds for any probability distribution over them. For a fixed choice of vectors, this claim is just  $\Pr_{v_k, x \sim p_V} (f_V(x)-f_V(x^*)\leq \eps) \leq 2e^{-\Omega(k^2)}$. Now since the algorithm $A_{k-1}$ only has oracles $O_V^{(i)}$ for $i<k$, the probability distribution $p_V$ only depends on $v_1,\ldots,v_{k-1}$. Since these are fixed, this is just a fixed distribution $p$. So we can instead establish our claim for all $x \in B(\vec 0, 1)$, which will also establish it for any distribution.

So what we need to establish is that for any $x \in B(\vec 0,1)$, $\Pr_{v_k} (f_V(x)-f_V(x^*)\leq \eps) \leq 2e^{-\Omega(k^2)}$, which is exactly what we showed in \Cref{lem:guess}.
\end{proof}

\begin{lemma}[$A_{t}$ and $A_{t-1}$ have similar outputs]\label{lem:similar}
    Let $\mA$ be a $k-1$ query algorithm and let $A_{t}$ for $t\in [k-1]$ be the unitaries defined in \cref{eq:unitaries}. Then
    \begin{equation}
        \E_V\bigl(\norm{A_t|0\>-A_{t-1}|0\>}^2\bigr) \leq \frac{4}{n^{7}}.
    \end{equation}
\end{lemma}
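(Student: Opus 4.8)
The plan is to run the hybrid argument of Bennett, Bernstein, Brassard, and Vazirani~\cite{BBBV97}. The key observation is that $A_t$ and $A_{t-1}$ differ in exactly one place: the $t$-th oracle call is $O_V^{(t)}$ in $A_t$ and $O_V = O_V^{(k)}$ in $A_{t-1}$, while all other unitaries and all other oracle calls coincide. First I would note that the initial segments agree, so the state
\[
    |\psi\> \defeq U_{t-1}O_V^{(t-1)}U_{t-2}\cdots O_V^{(1)}U_0|0\>
\]
reached just before the $t$-th query is the same in both runs; moreover, since $O_V^{(j)}$ depends only on $g_V^{(j)}$ and hence only on $v_j$, this state depends only on $v_1,\dots,v_{t-1}$ and is independent of $v_t,\dots,v_k$. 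Because every operation after the $t$-th query is the same in $A_t$ and $A_{t-1}$ and is norm-preserving, we get $\norm{A_t|0\>-A_{t-1}|0\>} = \norm{(O_V^{(t)}-O_V^{(k)})|\psi\>}$.

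Next I would bound $\norm{(O_V^{(t)}-O_V^{(k)})|\psi\>}^2$ by the probability that a measurement of the query register of $|\psi\>$ falls in a ``bad'' set. Writing the oracles in the diagonal form $O_V^{(j)} = \sum_x |x\>\<x|\otimes W_x^{(j)}$ over query points $x\in B(\vec 0,1)$ (represented with $b$ bits of precision), where $W_x^{(j)}$ is the unitary that XORs the pair $(f_V^{(j)}(x),\,\text{subgradient at }x)$ into the answer register, I claim $W_x^{(t)} = W_x^{(k)}$ whenever the maximum $\max_{i\in[k]}g_V^{(i)}(x)$ is attained at some index in $[t]$. Indeed, in that case $f_V^{(t)}(x)=f_V(x)$, and the smallest maximizing index over $[k]$ already lies in $[t]$ and so equals the smallest maximizing index over $[t]$, hence the ``smallest-index'' tie-breaking rule returns the same subgradient for both oracles (and $x=\vec 0$ is trivially not bad). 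Let $B_V$ be the set of $x$ where this fails, i.e.\ $\max_{i\in[k]}g_V^{(i)}(x)\neq\max_{i\in[t]}g_V^{(i)}(x)$. Decomposing $|\psi\> = \sum_x \alpha_x|x\>|\phi_x\>$ with the $|\phi_x\>$ unit vectors and $\sum_x|\alpha_x|^2=1$, orthogonality of the $|x\>$ together with $\norm{W_x^{(t)}-W_x^{(k)}}\le 2$ gives
\[
    \norm{(O_V^{(t)}-O_V^{(k)})|\psi\>}^2 = \sum_{x\in B_V}|\alpha_x|^2\norm{(W_x^{(t)}-W_x^{(k)})|\phi_x\>}^2 \le 4\sum_{x\in B_V}|\alpha_x|^2 = 4\Pr_{x\sim\mu}[x\in B_V],
\]
where $\mu$ is the distribution of the query register under measurement of $|\psi\>$.

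Finally I would take expectations and invoke \Cref{lem:argmax}. Fix $v_1,\dots,v_{t-1}$; then $|\psi\>$ and $\mu$ are fixed, while $B_V$ still depends on the random $v_t,\dots,v_k$. By Fubini,
\[
    \E_{v_t,\dots,v_k}\bigl[\Pr_{x\sim\mu}[x\in B_V]\bigr] = \E_{x\sim\mu}\bigl[\Pr_{v_t,\dots,v_k}[x\in B_V]\bigr],
\]
and for every fixed $x\in B(\vec 0,1)$ the inner probability is exactly the probability bounded in \Cref{lem:argmax}, hence at most $1/n^7$. Combining the three steps gives $\E_{v_t,\dots,v_k}\bigl[\norm{A_t|0\>-A_{t-1}|0\>}^2\bigr]\le 4/n^7$ for every fixed $v_1,\dots,v_{t-1}$, and averaging over $v_1,\dots,v_{t-1}$ yields $\E_V\bigl[\norm{A_t|0\>-A_{t-1}|0\>}^2\bigr]\le 4/n^7$.

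I expect the only genuinely delicate point to be the verification that $O_V^{(t)}$ and $O_V^{(k)}$ act by literally the same unitary on basis states $|x\>$ with $x\notin B_V$: this requires handling the tie-breaking convention for the returned subgradient and the non-differentiable point $x=\vec 0$. Everything else is routine hybrid-argument bookkeeping together with a direct application of \Cref{lem:argmax}.
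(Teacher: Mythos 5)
Your proposal is correct and follows essentially the same route as the paper's proof: isolate the single differing oracle call via unitary invariance, fix $v_1,\dots,v_{t-1}$ so the pre-query state $|\psi\>$ is fixed, decompose over the query register, bound each basis term by $4$ times the probability that $f_V^{(t)}(x)\neq f_V(x)$, and apply \Cref{lem:argmax}; your explicit check that the smallest-index tie-breaking makes the two oracles agree outside the bad set is a nice touch that the paper leaves implicit. One tiny slip: $O_V^{(j)}$ is the oracle for $f_V^{(j)}=\max_{i\in[j]}g_V^{(i)}$ and so depends on $v_1,\dots,v_j$ rather than on $v_j$ alone, but this does not affect your (correct) conclusion that $|\psi\>$ depends only on $v_1,\dots,v_{t-1}$.
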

\begin{proof}
    From the definition of the unitaries in \cref{eq:unitaries} and the unitary invariance of the spectral norm, we see that $\norm{A_t|0\>-A_{t-1}|0\>} = 
    \norm{ (O_V^{(t)}-O_V) U_{t-1}O_V^{(t-1)} \cdots O_V^{(1)} U_0|0\>}$. Let us again prove the claim for any fixed choice of vectors $v_1,\ldots,v_{t-1}$, which will imply the claim for any distribution over those vectors. Once we have fixed these vectors, the state $U_{t-1}O_V^{(t-1)} \cdots O_V^{(1)} U_0|0\>$ is a fixed state, which we can call $|\psi\>$. 
    Thus our problem reduces to showing for all quantum states $|\psi\>$,
    \begin{equation}\label{eq:OVOVt}
        \E_{v_t,\ldots,v_k}\bigl(\norm{(O_V^{(t)}-O_V)|\psi\>}^2\bigr) \leq \frac{4}{n^{7}}. 
    \end{equation}
    Now we can write an arbitrary quantum state as $|\psi\>=\sum_x \alpha_x |x\>|\phi_x\>$, where $x$ is the query made to the oracle, and $\sum_x |\alpha_x|^2 =1$.  Thus the LHS of \cref{eq:OVOVt} is equal to
    \begin{equation}
        %\E_{v_t,\ldots,v_k}\bigl(\norm{(O_V^{(t)}-O_V)|\psi\>}\bigr) \leq 
        \E_{v_t,\ldots,v_k}\left(\Norm{\sum_{x} \alpha_x (O_V^{(t)}-O_V)|x\>|\phi_x\>}^2\right)
        \leq 
        \sum_{x}  |\alpha_x|^2 \E_{v_t,\ldots,v_k}\left(\norm{  (O_V^{(t)}-O_V)|x\>}^2\right)        .
    \end{equation}

    Since $|\alpha_x|^2$ defines a probability distribution over $x$, we can again upper bound the right hand side for any $x$ instead. 
    Since $O_V^{(t)}$ and $O_V$ behave identically for some inputs $x$, the only nonzero terms are those where the oracles respond differently, which can only happen if $f_V^{(t)}(x) \neq f_V(x)$. When the response is different, we can upper bound $\norm{  (O_V^{(t)}-O_V)|x\>}^2$ by $4$ using the triangle inequality. Thus for any $x \in B(\vec 0,1)$, we have 
    \begin{equation}
        \E_{v_t,\ldots,v_k}\left(\norm{  (O_V^{(t)}-O_V)|x\>}^2\right)        
        \leq 4 \Pr_{v_t,\ldots,v_k}(f_V^{(t)}(x) \neq f_V(x)) \leq 4/n^{7}, 
    \end{equation}
    where the last inequality follows from \Cref{lem:argmax}.
\end{proof}

Finally we can put these two lemmas together to prove our lower bound.

\begin{lemma}[$\mA$ does not solve the problem]\label{lem:qlowerbound}
    Let $\mA$ be a $k-1$ query algorithm. Let $p_V$ be the probability distribution over $x \in B(\vec 0,1)$ obtained by measuring the output state $A|0\>$. Then $\Pr_{V, x\sim p_V} (f_V(x)-f_V(x^*)\leq \eps) \leq \frac{1}{\mathrm{poly}(n)}$.
\end{lemma}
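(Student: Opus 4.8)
The plan is to combine Lemma~\ref{lem:Akminusone} and Lemma~\ref{lem:similar} via a standard hybrid argument. The key observation is that the output distribution $p_V$ is obtained by measuring $A_0|0\> = A|0\>$, and we want to compare this to the output of $A_{k-1}|0\>$, whose output distribution we have already controlled in Lemma~\ref{lem:Akminusone}. Since measurement outcomes (and hence the probability of the event $f_V(x) - f_V(x^*) \leq \eps$) can differ by at most the trace distance between the two states, and the trace distance is bounded by the Euclidean distance $\norm{A_0|0\> - A_{k-1}|0\>}$, it suffices to bound this latter quantity in expectation over $V$.

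First I would write $\norm{A_0|0\> - A_{k-1}|0\>} \leq \sum_{t=1}^{k-1} \norm{A_{t-1}|0\> - A_t|0\>}$ by the triangle inequality. Then, taking expectation over $V$ and applying Cauchy--Schwarz (either on the sum, or via $\E[Y] \leq \sqrt{\E[Y^2]}$ applied to each term), Lemma~\ref{lem:similar} gives $\E_V\bigl(\norm{A_{t-1}|0\> - A_t|0\>}\bigr) \leq \sqrt{\E_V(\norm{A_{t-1}|0\> - A_t|0\>}^2)} \leq 2/n^{7/2}$ for each $t$. Summing over the $k-1 < n$ values of $t$ yields $\E_V\bigl(\norm{A_0|0\> - A_{k-1}|0\>}\bigr) \leq 2k/n^{7/2} \leq 2/n^{5/2}$. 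By Markov's inequality, with probability at least $1 - 1/n$ over the choice of $V$, we have $\norm{A_0|0\> - A_{k-1}|0\>} \leq 2/n^{3/2}$, so the total variation distance between $p_V$ and the output distribution of $A_{k-1}$ is at most $2/n^{3/2}$ for such $V$.

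Putting this together: for a $(1-1/n)$-fraction of $V$, the probability under $p_V$ of outputting an $\eps$-optimal $x$ exceeds the corresponding probability for $A_{k-1}$ by at most $2/n^{3/2}$; and averaging over all $V$, Lemma~\ref{lem:Akminusone} bounds the $A_{k-1}$ success probability by $2e^{-\Omega(k^2)}$. Combining the exceptional $1/n$ mass of $V$ (contributing at most $1/n$) with the remaining mass (contributing at most $2e^{-\Omega(k^2)} + 2/n^{3/2}$), we obtain $\Pr_{V, x \sim p_V}(f_V(x) - f_V(x^*) \leq \eps) \leq 1/\mathrm{poly}(n)$, since $k = \Theta(\log n)$-ish... actually $k = \Theta(1/\eps^2)$ and $n = \tilde O(1/\eps^6)$ so $e^{-\Omega(k^2)}$ is super-polynomially small in $n$, and the dominant term is $O(1/n)$, which suffices. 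The main obstacle, such as it is, is bookkeeping: being careful that the trace-distance-to-Euclidean-distance bound is applied correctly (trace distance $\leq \norm{\,|\psi\> - |\phi\>\,}$ for pure states, with the right constant), and that the expectation/Markov steps compose to give a genuine $1/\mathrm{poly}(n)$ rather than something that degrades. None of this is deep, but the hybrid telescoping and the interchange of the expectation over $V$ with the triangle inequality need to be stated cleanly.

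\begin{proof}
    The output distribution $p_V$ is obtained by measuring $A_0|0\>$, where $A_0 = A$. For pure states, the total variation distance between the distributions obtained by measuring them (in any fixed basis) is at most their Euclidean distance, so for any $V$,
    \begin{equation}
        \Bigl|\Pr_{x \sim p_V}(f_V(x)-f_V(x^*)\leq \eps) - \Pr_{x \sim p_V^{(k-1)}}(f_V(x)-f_V(x^*)\leq \eps)\Bigr| \leq \norm{A_0|0\> - A_{k-1}|0\>},
    \end{equation}
    where $p_V^{(k-1)}$ is the distribution obtained by measuring $A_{k-1}|0\>$. By the triangle inequality and Lemma~\ref{lem:similar} (applying $\E[Y] \leq \sqrt{\E[Y^2]}$ to each term),
    \begin{equation}
        \E_V\bigl(\norm{A_0|0\> - A_{k-1}|0\>}\bigr) \leq \sum_{t=1}^{k-1} \E_V\bigl(\norm{A_t|0\> - A_{t-1}|0\>}\bigr) \leq \sum_{t=1}^{k-1} \sqrt{\frac{4}{n^{7}}} \leq \frac{2k}{n^{7/2}} \leq \frac{1}{n^{3}},
    \end{equation}
    using $2k < n$. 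Hence $\E_{V,x\sim p_V}(\mathbbold{1}[f_V(x)-f_V(x^*)\leq \eps])$ differs from $\E_{V,x\sim p_V^{(k-1)}}(\mathbbold{1}[f_V(x)-f_V(x^*)\leq \eps])$ by at most $\E_V\bigl(\norm{A_0|0\> - A_{k-1}|0\>}\bigr) \leq 1/n^{3}$. By Lemma~\ref{lem:Akminusone}, the latter quantity is at most $2e^{-\Omega(k^2)}$. Therefore
    \begin{equation}
        \Pr_{V,x\sim p_V}(f_V(x)-f_V(x^*)\leq \eps) \leq 2e^{-\Omega(k^2)} + \frac{1}{n^{3}} \leq \frac{1}{\mathrm{poly}(n)},
    \end{equation}
    since $k = \Omega(1/\eps^2)$ and $n = \tO(1/\eps^6)$, so $e^{-\Omega(k^2)}$ is super-polynomially small in $n$.
\end{proof}
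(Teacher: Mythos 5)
Your proof is correct and takes essentially the same route as the paper: a hybrid argument telescoping from $A$ to $A_{k-1}$, using \Cref{lem:similar} for adjacent hybrids and \Cref{lem:Akminusone} for the last one, with your bound that measurement statistics of pure states differ by at most their $\ell_2$ distance serving as a slightly cleaner substitute for the paper's projector-plus-Markov bookkeeping. The only blemish is arithmetic: $\sum_{t=1}^{k-1}\sqrt{4/n^{7}} \leq 2k/n^{7/2} \leq 1/n^{5/2}$ (using $2k<n$), not $1/n^{3}$, which still yields the claimed $1/\mathrm{poly}(n)$ bound.
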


\begin{proof}
    Let $P_V$ be the projection operator that projects a quantum state $\ket{\psi}$ onto the space spanned by vectors $\ket{x}$ for $x$ such that $f_V(x)-f_V(x^*)\leq \eps$. Then $\| P_V A \ket{0} \|^2 = \Pr_{x\sim p_V} (f_V(x)-f_V(x^*)\leq \eps)$. We know from \Cref{lem:Akminusone} that $\E_V\bigl(\norm{P_V A_{k-1} \ket{0}}^2\bigr) \leq 2e^{-\Omega(k^2)}$. We prove our upper bound on the probability by showing that it is approximately the same as $\E_V\bigl(\norm{P_V A_{k-1} \ket{0}}^2\bigr)$.
    
    \Cref{lem:similar} states that for all $1 \leq t < k$, $\E_V\bigl(\norm{A_{t}|0\>-A_{t-1}|0\>}^2\bigr) \leq \frac{4}{n^{7}}$. Using telescoping sums and the Cauchy-Schwarz inequality, we see that
    \begin{align}
        \E_V\bigl( \norm{A_{k-1}|0\>-A|0\>}^2 \bigr) &\leq \E_V\left( \left(\sum_{t \in [k-1]} \norm{A_{t}|0\>-A_{t-1}|0\>}\right)^2 \right)\\
        &\leq \E_V\left(\sum_{t \in [k-1]} \norm{A_{t}|0\>-A_{t-1}|0\>}^2\right) \left( \sum_{t \in [k-1]} 1^2 \right) \leq \frac{4k}{n^{7}} \cdot k.
    \end{align}

    For all $V$, $\abs{\norm{P_V A_{k-1} \ket{0}} - \norm{P_V A \ket{0}}} \leq \norm{P_V A_{k-1} \ket{0} - P_V A \ket{0}} = \norm{P_V (A_{k-1} \ket{0} - A \ket{0})}  \leq \norm{A_{k-1} \ket{0} - A \ket{0}}$.

    Hence $\E_V\bigl(\bigl(\norm{P_V A_{k-1} \ket{0}} - \norm{P_V A \ket{0}}\bigr)^2\bigr) \leq \frac{4k^2}{n^{7}}$. By Markov's inequality, $\Pr_V\bigl(\bigl(\norm{P_V A_{k-1} \ket{0}} - \norm{P_V A \ket{0}}\bigr)^2 \geq \frac{1}{n^4} \bigr) \leq \frac{4k^2}{n^{3}}$. So it is overwhelmingly likely that $\norm{P_V A \ket{0}} - \norm{P_V A_{k-1} \ket{0}} \leq \frac{1}{n^2}$, which implies $\norm{P_V A \ket{0}}^2 - \norm{P_V A_{k-1} \ket{0}}^2 \leq \frac{2}{n^2}$ since both norms are at most $1$. Even assuming that in the unlikely cases the difference is the maximum possible, we still get $\E_V\bigl(\norm{P_V A \ket{0}}^2 - \norm{P_V A_{k-1} \ket{0}}^2\bigr) \leq \frac{4k^2}{n^{3}} + \frac{2}{n^2}$.
    
    We can now use linearity of expectation and upper bound our required probability as 
    \begin{equation} 
        \Pr_{V,x\sim p_V} (f_V(x)-f_V(x^*)\leq \eps) = \E_V\bigl(\norm{P_V A \ket{0}}^2\bigr) \leq 2e^{-\Omega(k^2)} + \frac{4k^2}{n^{3}} + \frac{2}{n^2}. \qedhere
    \end{equation}
\end{proof}

Note that this establishes a statement similar to \Cref{thm:quantumlb}, except with a polynomially larger value of $n$. This result is sufficient to establish the optimality of gradient descent in the dimension-independent setting. In the next section we quantitatively improve the lower bound by reducing the value of $n$.

\subsection{Improved lower bound using the wall function} \label{sec:wall}
\renewcommand{\nem}{p}
\newcommand{\newf}{f}

In this section we improve the dimension dependence of the previous lower bound using the strategy used by \cite{BJLLS19}, where they introduce a function called the \emph{wall function}. We now provide a high-level overview of this strategy before getting into the details.

The previous construction required a larger dimension $n$ because we needed to use a large value of $\gamma$, which in turn was large because we wanted the following key property (i.e., \Cref{lem:argmax}) to hold: 
If you query the function $f_V(x)$ with a random vector $x \in \R^n$, the function is almost certainly maximized on the first term in the max, and the answer of the gradient oracle is $v_1$. 
To reduce the parameter $n$, we will use a different function in this section. This function will be built out of the functions $p_V:\R^n \to \R$, where $V=(v_1,v_2,\ldots,v_k)$ is again a set of $k$ orthonormal vectors:
\begin{equation} \label{eq:pV}
    \nem_V(x) \defeq \max_{i \in [k]} \{\dotp{v_i}{x} - i\gamma\}, 
\end{equation}
where $\gamma$ is unspecified for now. If we only allow the algorithm to query the oracle with a vector $x$ with $\norm{x}=1$, this function is essentially the same as the function $f_V$ we used in the previous section, up to an additive $k\gamma$ term. Allowing the algorithm to query $p_V$ at vectors $x$ with $\norm{x} \leq 1$ is fine too, since our key property will still hold: Querying the gradient oracle with a random $x$ with norm less than $1$ will still return $v_1$ almost certainly. But if we allow the algorithm to query with vectors $x$ with extremely large norm, the additive term $i\gamma$ will be negligible, and the property we want (that the answer is almost certainly $v_1$) will not hold anymore. 

The wall function construction is a way of fixing this problem. The wall function constrains the set of points that can be queried to gain useful information about the set $V$. At the beginning, when the algorithm does not know the set $V$, the wall function essentially forces the algorithm to query the oracle with vectors $x$ with small norm. If the oracle is queried with a vector of large norm, the wall function ``hides'' information about the set $V$ by outputting an answer that (with high probability) is independent of $V$. More generally, if the algorithm has learned a subset of $V$, and the algorithm queries the oracle with a vector $x$ with a large projection outside of the span of the vectors it knows, then (with high probability) the oracle's answer hides information about $V$. In this setting, querying a unit vector at random would be inadvisable since the whole vector would be outside of the span of the vectors the algorithm knows, and the oracle's response will be non-informative. The useful queries will be shorter vectors which do not trigger the wall function's obfuscation, since any projection of a short vector is also short. This restriction on the query vector length now allows us to choose a smaller value of $\gamma$ than in the previous construction, and hence have a smaller dimension $n$. 

\para{Formal construction.}
We now describe the construction formally. As in the previous section, $V=(v_1,\ldots,v_k)$ is a set of $k$ orthonormal vectors in $\R^n$. Our family of functions will depend on several parameters ($n$, $k$, $\delta$, and $\gamma$), which are all functions of $\eps$, which is the single parameter on which the function family depends. 

Let us start with $k$. As before, we will show a lower bound of $\Omega(k)$, and so we want $k$ to be a small multiple of $1/\eps^2$. Thus we choose  $k \defeq \frac{1}{100\eps^2}$. For some large enough constant $c$, we set
\begin{equation}\label{eq:nk}
    n \defeq c k^2 \log k = \tO\left(\frac{1}{\epsilon^4}\right).
\end{equation}
This is chosen to satisfy \cref{eqn:wallkgamma}. 
Let $\delta$ be chosen such that 
\begin{equation} \label{eq:delta}
    \frac{\delta}{\log(1/\delta)} \defeq 32 \sqrt{\frac{k \log n}{n}} + \frac{1}{\sqrt{k}} = \Theta\left(\frac{1}{\sqrt{k}}\right). 
\end{equation}
This value is chosen to make the first property in \Cref{lem:aboutthewall} hold. Let $p_V$ be the function defined in \cref{eq:pV} with $\gamma$ defined as
\begin{equation}\label{eqn:wallgamma}
    \gamma \defeq 8 \delta \sqrt{\log n/n}.
\end{equation}
This value is chosen for a similar reason to before, and more precisely it is required in \Cref{lem:wallargmax}. As in the previous lower bound (and for the same reason), we want
\begin{equation}\label{eqn:wallkgamma}
    k \gamma \leq \frac{1}{10\sqrt{k}}.
\end{equation}
Our choice of $n$ in \cref{eq:nk} satisfies \cref{eqn:wallkgamma}. 

Before constructing the wall function, we need to define the \emph{correlation cones} $C_1, \dots, C_k$, which depend on $v_1, \dots, v_k$:
\begin{equation} 
    C_i \defeq \left\{ x \in \R^n \middle| \frac{\abs{\dotp{v_i}{x}}}{\|x\|} \geq 8 \sqrt{\frac{\log n}{n}} \right\}. 
\end{equation}
Note that if you choose a random unit vector $x$, it will most likely not be in $C_i$ since the normalized inner product will be roughly $1/\sqrt{n}$. Thus $C_i$ is the set of directions that correlate strongly with $v_i$.

We define the set
\begin{equation}
    \Omega = \{ x \in \R^n \mid \|x\| \in [\delta,1] \wedge \forall i \in [k], ~ x \notin C_i \}    
\end{equation}
to be the set of vectors that have non-negligible norm and are not in any of the correlation cones $C_i$. We want our construction to give non-informative answers on $\Omega$ so that the algorithm is forced to query on the complement of $\Omega$. 

We now define our non-informative function $h(x) = 2\|x\|^{1 + \alpha}$, with $\alpha > 0$ set so that $\delta^\alpha = 1/2$. Note that because of the value of $\delta$ chosen, $1/\alpha=\Theta(\log n)$, so $\alpha$ is small.
We want our wall function to be equal to $h(x)$ on $\Omega$, but we need to define it everywhere in $\R^n$. We do so by extending this function to all of $\R^n$ by convexity. Informally this means that the function takes the smallest value it can outside $\Omega$ while remaining convex. Formally, we define the wall function as
\begin{equation} 
    \wall_V(x) \defeq \max_{y \in \Omega} \{ h(y) + \dotp{\nabla h(y)}{x - y} \}. 
\end{equation}

In the following lemma, we state some properties of the wall function established by \cite{BJLLS19}.

\begin{lemma}[Properties of the wall function]\label{lem:aboutthewall}
    The wall function satisfies the following properties.
\begin{enumerate}
    \item \cite[Lemma 2]{BJLLS19}: The point $\tilde{x} = -\sum_{i \in [k]} v_i/\sqrt{k}$ satisfies $\wall(\tilde{x}) \leq -1/\sqrt{k}$.
    \item \cite[Lemma 3]{BJLLS19}: Let $x \in \R^n$. For any $t \in [k]$, let $x = w+z$, where $w \in \spn(v_1,\dots,v_t)$ and $z \in \spn(v_1,\dots,v_t)^{\perp}$. If $\forall j>t, ~ z \notin C_j$, then $\wall_V(x)$ does not depend on $v_{t+1},\dots,v_k$. For such $x$, $\wall_V(x)$ takes the same value as the following function.
    \begin{equation} \wall_V^{(t)}(x) = \max_{a,b \in \R_{+} : a^2 + b^2 \in [\delta^2,1]} \left\{ -2 \alpha c^{1+\alpha} + 2 \frac{1+\alpha}{c^{1-\alpha}} \left( \max_{y \in \tilde{\Omega}_{a,b},\|y\|=a} \dotp{y}{w} + b\|z\| \right) \right\} \end{equation}
    where $c = \sqrt{a^2 + b^2}$ and $\tilde{\Omega}_{a,b} = \{ x \in \spn(v_1, \dots, v_t) \mid \forall i \leq t ~ \frac{\abs{\dotp{v_i}{x}}}{\|x\|} \frac{a}{\sqrt{a^2 + b^2}} < 8\sqrt{\log n/n} \}$.\footnote{Lemma 3 in~\cite{BJLLS19} gives a different definition of $\tilde{\Omega}$, but we believe this is the set they meant to define.}
    \item Discussion after \cite[Lemma 3]{BJLLS19}: Furthermore, if $\forall j > t, ~ z \notin C_j$ and $\|z\| \geq \delta$, then $\max_{i \in [k]} \dotp{v_i}{x} - i\gamma \neq \max_{i \in [t]} \dotp{v_i}{x} - i\gamma$ implies that $\wall_V(x) \geq \nem_V(x)$.
\end{enumerate}
\end{lemma}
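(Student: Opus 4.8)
These three items are restatements (in our notation) of Lemma~2 and Lemma~3, and the discussion following Lemma~3, of Bubeck, Jiang, Lee, Li, and Sidford~\cite{BJLLS19}, so the plan is to reprove them for the explicit choice $h(x)=2\norm{x}^{1+\alpha}$ and the parameters fixed in \cref{eq:nk}, \cref{eqn:wallgamma}, and \cref{eqn:wallkgamma}. The computation to keep at hand throughout is the explicit tangent-plane form: for $y\neq\vec 0$ we have $\nabla h(y)=2(1+\alpha)\norm{y}^{\alpha-1}y$, hence
\begin{equation}
    h(y)+\dotp{\nabla h(y)}{x-y}= -2\alpha\norm{y}^{1+\alpha}+2(1+\alpha)\norm{y}^{\alpha-1}\dotp{y}{x},
\end{equation}
so $\wall_V(x)$ is the supremum of this affine-in-$x$ quantity over $y\in\Omega$. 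For the first property, $\tilde x=-\sum_i v_i/\sqrt k$ has $\norm{\tilde x}=1$ and $\abs{\dotp{v_i}{\tilde x}}=1/\sqrt k$, which is much larger than $8\sqrt{\log n/n}$ under \cref{eq:nk}; thus $\tilde x\in C_i$ for every $i$ and $\tilde x\notin\Omega$, so one really must bound the supremum over genuine $y\in\Omega$. For such $y$, $y\notin C_i$ forces $\abs{\dotp{v_i}{y}}\leq 8\norm{y}\sqrt{\log n/n}$ for all $i$, so $\abs{\dotp{y}{\tilde x}}\leq\tfrac1{\sqrt k}\sum_i\abs{\dotp{v_i}{y}}\leq 8\sqrt k\,\norm{y}\sqrt{\log n/n}$; substituting into the tangent formula and using $\norm{y}\in[\delta,1]$ together with $\delta^\alpha=1/2$ bounds the supremum by $-\alpha\delta+O(\sqrt{k\log n/n})$, which is at most $-1/\sqrt k$ precisely because of the way $\delta$ is pinned down in \cref{eq:delta}.

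For the second property, take $x=w+z$ with $w\in\spn(v_1,\dots,v_t)$, $z\in\spn(v_1,\dots,v_t)^{\perp}$, and $z\notin C_j$ for all $j>t$. I would argue that the supremum defining $\wall_V(x)$ may be restricted to those $y\in\Omega$ whose component orthogonal to $\spn(v_1,\dots,v_t)$ points along $z$: writing $y=y_{\parallel}+y_{\perp}$ with $a=\norm{y_{\parallel}}$, $b=\norm{y_{\perp}}$, the only place $y_{\perp}$ enters the tangent value is the term $2(1+\alpha)\norm{y}^{\alpha-1}\dotp{y_{\perp}}{z}$, which for fixed $a,b$ is maximized by $y_{\perp}=b\,z/\norm{z}$, and this choice keeps $y$ inside $\Omega$ since membership only constrains $\norm{y}$ and the correlations with $v_1,\dots,v_k$, and $z\notin C_j$ for $j>t$ controls the $v_j$-correlations of $z/\norm z$. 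After this substitution the residual optimization --- over the direction of $y_{\parallel}$ inside $\spn(v_1,\dots,v_t)$ and over the radii $a,b$ with $a^2+b^2\in[\delta^2,1]$ --- involves only $w$, $\norm z$, and $v_1,\dots,v_t$; collecting the terms reproduces the stated formula for $\wall_V^{(t)}$ and shows in particular that the value is independent of $v_{t+1},\dots,v_k$. This is the step I expect to be the main obstacle: justifying rigorously that a near-optimal $y$ can be taken with orthogonal component along $z$, and that the leftover optimization sees only $(w,\norm z,v_1,\dots,v_t)$, requires a careful convex-geometric analysis of $\Omega$ and of which $y\in\Omega$ can be near-optimal, and this is the technical heart of the wall-function construction.

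For the third property, assume in addition $\norm z\geq\delta$ and suppose the maximum in $\nem_V(x)=\max_i\{\dotp{v_i}{x}-i\gamma\}$ is attained at some index $j>t$. Then $\nem_V(x)=\dotp{v_j}{z}-j\gamma$ (using $w\perp v_j$), and $\abs{\dotp{v_j}{z}}\leq 8\norm z\sqrt{\log n/n}$ because $z\notin C_j$, so $\nem_V(x)$ is small on the scale set by $\gamma$ via \cref{eqn:wallgamma}. On the other hand, by the second property $\wall_V(x)=\wall_V^{(t)}(x)$, and one lower-bounds $\wall_V^{(t)}(x)$ directly --- e.g.\ by plugging in feasible $(a,b)$ with $b=\norm z$ at the smallest allowed radius, which is legal exactly because $\norm z\geq\delta$ --- to obtain a bound of the same order; comparing the two, again using $\norm z\geq\delta$ and \cref{eqn:wallgamma}, yields $\wall_V(x)\geq\nem_V(x)$. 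Thus, granting the second property, the first and third are comparatively short computations determined by the parameter choices.
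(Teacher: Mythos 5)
The paper itself never proves this lemma: all three items are imported from Bubeck et al.~\cite{BJLLS19} (their Lemmas 2 and 3 and the discussion following Lemma 3), so there is no in-paper argument to match and your decision to reprove the properties from the explicit form $h(x)=2\norm{x}^{1+\alpha}$ is already a different route from the paper's, which treats them as a black box. Judged on its own terms, your sketch is sound where it is concrete. The tangent-plane identity is correct, and for item 1 the bound $\abs{\dotp{y}{\tilde x}}\le 8\sqrt{k}\,\norm{y}\sqrt{\log n/n}$ for $y\in\Omega$, together with $-2\alpha\norm{y}^{1+\alpha}\le-\alpha\delta$ and the fact that \cref{eq:delta} makes $\alpha\delta$ dominate both $16(1+\alpha)\sqrt{k\log n/n}$ and $1/\sqrt{k}$ (since $\alpha\delta$ is within a constant of $\delta/\log(1/\delta)$ and $\alpha\le 1$), does give $\wall_V(\tilde x)\le-1/\sqrt{k}$. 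For item 3 your plan also closes, and in fact more simply than you suggest: the single witness $y=\delta z/\norm{z}$ lies in $\Omega$ precisely because $z\perp\spn(v_1,\dots,v_t)$ and $z\notin C_j$ for $j>t$, and it yields $\wall_V(x)\ge-\alpha\delta+(1+\alpha)\norm{z}\ge\norm{z}$, whereas an argmax index $j>t$ forces $\nem_V(x)\le\abs{\dotp{v_j}{z}}\le 8\norm{z}\sqrt{\log n/n}<\norm{z}$; there is no need to route this through $\wall_V^{(t)}$.

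The one place your write-up stops short of a proof is item 2, which you flag yourself, but the obstacle is smaller than you fear: the replacement $y_\perp\mapsto b\,z/\norm{z}$ that you propose essentially \emph{is} the argument. For any $y=y_\parallel+y_\perp\in\Omega$ with $a=\norm{y_\parallel}$, $b=\norm{y_\perp}$, the substitution can only increase $\dotp{y_\perp}{z}$, leaves $\norm{y}$ and the correlations with $v_1,\dots,v_t$ unchanged, and keeps $y\notin C_j$ for $j>t$ because $\abs{\dotp{v_j}{y}}=b\abs{\dotp{v_j}{z}}/\norm{z}<8b\sqrt{\log n/n}\le 8\norm{y}\sqrt{\log n/n}$; conversely, every vector of this restricted form whose parallel part satisfies the $\tilde\Omega_{a,b}$ constraint lies in $\Omega$. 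These two observations give both directions of $\wall_V(x)=\wall_V^{(t)}(x)$ and hence the independence from $v_{t+1},\dots,v_k$; the only genuine fussiness is in degenerate cases ($z=0$ or $b=0$) and in the strict-versus-nonstrict inequalities and supremum-versus-maximum conventions in the definition of $\Omega$ --- the same delicacy the paper acknowledges in its footnote about the definition of $\tilde\Omega$. As written, however, your item 2 is a plan rather than a proof, so for a self-contained treatment you should carry out this replacement step explicitly (or simply cite \cite{BJLLS19}, as the paper does).
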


The second property in the lemma implies that the value of the wall function on $x$ grows with $z$, the uncorrelated projection of $x$. The third item states that if $z$ is somewhat large and the maximum in the definition of $\nem_V$ is achieved at an index with $i>t$, then $\wall_V(x)$ is actually larger than $\nem_V(x)$.

Equipped with these properties, we define the actual class of functions. For a set $V$ of $k$ orthonormal vectors, we define 
\begin{equation}
    \newf_V(x) \defeq \max \{ \nem_V(x), \wall_V(x) \}.    
\end{equation}
Note that $\wall_V(x)$ is also convex, being a maximum of linear functions. The maximum norm of the gradient of any of the linear functions is $2(1+\alpha)$ and hence the function is $3$-Lipschitz. Each of the $\nem_V$ functions is $1$-Lipschitz. Hence $\newf_V$ is also $3$-Lipschitz.
This completely specifies the function $\newf_V$ that we will use for a given value of $\epsilon$. 

We also define the following functions.
\begin{equation} \nem_V^{(t)}(x) \defeq \max_{i \in [t]} \{\dotp{v_i}{x} - i\gamma\} \text{ and } \newf_V^{(t)}(x) = \max \{ \nem_V^{(t)}(x), \wall_V^{(t)}(x) \}. \end{equation}
Note that if the preconditions in item 2 of \Cref{lem:aboutthewall} are satisfied for some value $t$, then they are also satisfied for $t+1$. So for such $x$, $\wall_V(x) = \wall_V^{(t)}(x) = \wall_V^{(t+1)}(x)$.

\Cref{lem:aboutthewall} implies some very convenient statements. Let $v_1, \dots, v_k$ be fixed orthonormal vectors. Then for any point $x = w+z$, where $w \in \spn(v_1,\dots,v_{t-1})$ and $z \in \spn(v_1,\dots,v_{t-1})^{\perp}$ we can make the following statements.

\begin{itemize}
    \item If $\|z\| \geq \delta$ and $\forall j \geq t, ~ z \notin C_j$,\\
        then $\wall_V(x) = \wall_V^{(t-1)}(x)$ and also $\nem_V(x) \neq \nem_V^{(t-1)}(x) \implies \wall_V(x) \geq \nem_V(x)$. Hence $\newf_V(x) = \newf_V^{(t-1)}(x)$.
    \item If $\|z\| < \delta$ and $\forall j \geq t, ~ z \notin C_j$ and $\nem_V(x) = \nem_V^{(t)}(x)$,\\
        then $\wall_V(x) = \wall_V^{(t-1)}(x)$ and $\nem_V(x) = \nem_V^{(t)}(x)$. So $\newf_V(x) = \newf_V^{(t)}(x)$.
\end{itemize}

We now show the following lemma, akin to \Cref{lem:argmax}.

\begin{lemma}\label{lem:wallargmax}
    Let $1\leq t \leq k$ be integers and $\{v_1,\ldots,v_{t-1}\}$ be a set of orthonormal vectors. Let $\{v_t,\ldots,v_k\}$ be chosen uniformly at random so that the set $\{v_1,\ldots,v_k\}$ is orthonormal. Then 
\begin{equation}\label{eqn:exagain}
    \forall x \in \R^n: \Pr_{v_t,\ldots,v_k} 
    \left( \newf_V(x) \neq \newf_V^{(t)}(x) \right) \leq \frac{1}{n^{7}}.
\end{equation}    
\end{lemma}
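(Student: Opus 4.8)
\emph{Proof proposal.} The plan is to fix $x\in\R^n$ together with the known orthonormal vectors $v_1,\dots,v_{t-1}$, decompose $x=w+z$ with $w\in\spn(v_1,\dots,v_{t-1})$ and $z\in\spn(v_1,\dots,v_{t-1})^{\perp}$, and observe that $w$ and $z$ are deterministic while the only randomness lies in $v_t,\dots,v_k$. I would then isolate a ``good event'' $G$, depending only on $v_t,\dots,v_k$, for which (a) $G$ forces $\newf_V(x)=\newf_V^{(t)}(x)$ deterministically and (b) $\Pr(\bar G)\le 1/n^7$. Concretely, take $G$ to be the conjunction of: (i) $z\notin C_j$ for every $j\ge t$ (i.e.\ $|\dotp{v_j}{z}|<8\sqrt{\log n/n}\,\|z\|$), and (ii) in the case $\|z\|<\delta$, the extra event that $|\dotp{v_i}{z}|\le\gamma/2$ for all $i\ge t$. (The degenerate case $z=\vec 0$, i.e.\ $x\in\spn(v_1,\dots,v_{t-1})$, is disposed of directly, since then $\dotp{v_i}{x}=0$ for all $i\ge t$.)

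The implication (a) is a case analysis driven by the bulleted consequences of \Cref{lem:aboutthewall}. If $\|z\|\ge\delta$, event (i) is exactly the precondition of item~2 of \Cref{lem:aboutthewall} for the decomposition along $\spn(v_1,\dots,v_{t-1})$, so the first bullet gives $\newf_V(x)=\newf_V^{(t-1)}(x)$ and, via the precondition-propagation remark, also $\wall_V^{(t-1)}(x)=\wall_V^{(t)}(x)=\wall_V(x)$; feeding this into the dichotomy supplied by the same bullet — either $\nem_V(x)=\nem_V^{(t-1)}(x)$ (hence also $=\nem_V^{(t)}(x)$), or $\wall_V(x)\ge\nem_V(x)$ — one checks in both branches that $\newf_V(x)$, $\newf_V^{(t-1)}(x)$ and $\newf_V^{(t)}(x)$ all coincide. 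If $\|z\|<\delta$, I would first use (ii) to deduce $\nem_V(x)=\nem_V^{(t)}(x)$: for $i\ge t$ we have $\dotp{v_i}{x}=\dotp{v_i}{z}$ (since $v_i\perp w$) and for $i<t$ we have $\dotp{v_i}{x}=\dotp{v_i}{w}$ (since $v_i\perp z$), so for every $i>t$, $\dotp{v_i}{x}-i\gamma=\dotp{v_i}{z}-i\gamma\le\gamma/2-(t+1)\gamma=-\gamma/2-t\gamma\le\dotp{v_t}{z}-t\gamma\le\nem_V^{(t)}(x)$; then (i) together with $\nem_V(x)=\nem_V^{(t)}(x)$ is precisely the precondition of the second bullet, which directly yields $\newf_V(x)=\newf_V^{(t)}(x)$.

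For (b) I would bound $\Pr(\bar G)$ by a union bound over the at most $k$ indices $\ell\in\{t,\dots,k\}$. Conditioning on $v_t,\dots,v_{\ell-1}$ makes $v_\ell$ Haar-random on the sphere of the $(n-\ell+1)$-dimensional subspace $\spn(v_1,\dots,v_{\ell-1})^{\perp}$, into which the projection of $z$ has norm at most $\|z\|$, so \Cref{prop:concentration} applies: failure of (i) at index $\ell$ involves threshold $8\sqrt{\log n/n}$ and has probability $\le 2e^{-(n-\ell+1)\cdot 32\log n/n}$, while failure of (ii) at index $\ell$ (where $\|z\|<\delta$) involves normalized threshold $\gamma/(2\|z\|)\ge\gamma/(2\delta)=4\sqrt{\log n/n}$ and has probability $\le 2e^{-(n-\ell+1)\cdot 8\log n/n}$. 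Because $n=ck^2\log k$ we have $n-\ell+1\ge(1-o(1))n$ and $k=O(\sqrt n/\sqrt c)$, so each sum is $\le 1/n^{7}$ once $c$ is a large enough absolute constant. I expect Step (a) — keeping straight exactly which of $\newf_V^{(t-1)}$, $\newf_V^{(t)}$, $\newf_V$ the bullets of \Cref{lem:aboutthewall} equate, and checking they all agree on $G$ — to be the only genuinely delicate part; the probabilistic Step (b) is essentially the same estimate as in \Cref{lem:argmax}, once the values of $\gamma$, $\delta$, $C_j$ and $n$ are substituted.
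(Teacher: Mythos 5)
Your proposal is correct and follows essentially the same route as the paper's proof: the same decomposition $x=w+z$, the same case split on $\|z\|\geq\delta$ versus $\|z\|<\delta$, the same good events ($z\notin C_j$ for $j\geq t$, and $|\dotp{v_i}{z}|\leq\gamma/2$ when $\|z\|<\delta$) combined with \Cref{prop:concentration} and a union bound over the at most $k$ random directions, with the bullets following \Cref{lem:aboutthewall} supplying the deterministic implication $\newf_V(x)=\newf_V^{(t)}(x)$ on the good event. Your treatment is, if anything, slightly more careful than the paper's (explicit conditioning giving the $(n-\ell+1)$-dimensional sphere, the $z=\vec 0$ degenerate case, and the check that $\newf_V^{(t-1)}$, $\newf_V^{(t)}$, $\newf_V$ all coincide), but the substance is the same.
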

\begin{proof}
    Let $x = w+z$, where $w \in \spn(v_1,\dots,v_{t-1})$ and $z \in \spn(v_1,\dots,v_{t-1})^{\perp}$.
    Let $E_x$ denote the event whose probability we want to upper bound.
    
    If $\|z\| \geq \delta$, then $E_x$ can only occur if $z \in C_j$ for some $j \geq t$. Using \Cref{prop:concentration}, the fact that each $v_j$ in this range is chosen uniformly at random from an $n-t+1$-dimensional ball and a union bound, this probability is upper bounded by $k \cdot 2e^{-(n-t+1) \cdot 32 \frac{\log n}{n}} \leq 2k \cdot 2^{-32\log n} = 2k/n^{32} \leq 1/n^{31}$, with the inequalities holding because $n > 4k$.

    If $\|z\| \leq \delta$, then $E_x$ can only occur if $z \in C_j$ for some $j \geq t$ \emph{or} $\nem_V(x) \neq \nem_V^{(t)}(x)$. The former probability we have already upper bounded by $1/n^{31}$. The latter probability can be upper bounded as follows. If the latter event, let us call it $E'_x$, occurs then it must hold that
    \begin{equation}
        \max_{i \in \{t+1,\ldots,k\}} \dotp{{v_i}}{x} - i\gamma = \max_{i \in \{t+1,\ldots,k\}} \dotp{{v_i}}{z} - i\gamma > 
        \max_{i \in [t]} \dotp{{v_i}}{x} - i\gamma \geq 
        \dotp{{v_t}}{z} - t\gamma.
    \end{equation} 
    We will show that this event is very unlikely. To do so, let $F_x$ be the event that for all $i \in \{t,\ldots,k\}$, $\dotp{{v_i}}{z} \in [-\frac{\gamma}{2},+\frac{\gamma}{2}]$. Note that if $F_x$ occurs, then the terms in the max are in decreasing order, and we have 
    \begin{equation}
        \dotp{{v_t}}{z} - t\gamma \geq \dotp{{v_{t+1}}}{z} - (t+1)\gamma \geq \cdots \geq \dotp{{v_{k-1}}}{z} - (k-1)\gamma \geq \dotp{{v_k}}{z} - k\gamma,
    \end{equation}
    which contradicts the previous equation. Thus if $E'_x$ holds then the complement of $F_x$, $\bar{F_x}$ must hold, which means $\Pr(E'_x) \leq \Pr(\bar{F_x})$. So let us show that $F_x$ is very likely.
    
    The event $\bar{F_x}$ holds if for any $i \in \{t,\ldots,k\}$, $\dotp{{v_i}}{z} \notin [-\frac{\gamma}{2},+\frac{\gamma}{2}]$. We can upper bound this probability for any particular $i \in \{t,\ldots,k\}$ using \Cref{prop:concentration} and the fact that $v_i$ is chosen uniformly at random from an $n-t+1$-dimension ball. This is the same as the probability that $\dotp{{v_i}}{z/\delta} \notin [-4\sqrt{\log n/n},4\sqrt{\log n/n}]$. Since $z/\delta \in B(\vec 0,1)$, this is at most $2e^{-(n-t+1) \cdot 8 \frac{\log n}{n}} \leq 2 \cdot 2^{-8\log n} = 2/n^{8}$,  with the inequality holding because $n > 4t$. The probability that this happens for any $i$ is at most $k$ times this probability, by the union bound. Using the fact that $4k < n$, we get that $\Pr(E'_x) \leq \Pr(\bar{F_x}) < 1/2n^{7}$.

    Putting it all together, we can upper bound the probability in the lemma statement by the maximum of $1/n^{31}$ and $1/n^{31} + 1/2n^7$, and so the lemma follows.
\end{proof}

Letting $\tilde{x} = - \sum_{i \in [k]} v_i/\sqrt{k}$, it is clear that $\nem_V(\tilde{x}) \leq -1/\sqrt{k}$. We have also seen that $\wall_V(\tilde{x}) \leq -1/\sqrt{k}$. So $\newf_V(\tilde{x}) \leq -1/\sqrt{k} = -10\epsilon$. Any point $x$ minimizing $\newf_V$ to within $\epsilon$ of the optimum must satisfy $\nem_V(x) \leq -9\epsilon$. From \cref{eqn:wallkgamma}, we see that $k \gamma \leq \epsilon$. So the point $x$ must also satisfy $\dotp{v_k}{x} \leq -8\epsilon$. Using this, we get the following analog of \Cref{lem:guess}, whose proof is identical.

\begin{lemma}\label{lem:guessagain}
    Let $k>0$ be an integer and $\{v_1,\ldots,v_{k-1}\}$ be a set of orthonormal vectors. Let $v_k$ be chosen uniformly at random from $\spn(v_1,\ldots,v_{k-1})^{\perp}$ and let $V=(v_1,\ldots,v_k)$. Then
\begin{equation}
    \forall x \in B(\vec 0,1): \Pr_{v_k} 
    \left(f_V(x) - f_V(x^*) \leq \eps \right) \leq 2e^{-\Omega(k)}.
\end{equation}
\end{lemma}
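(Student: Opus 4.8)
The plan is to reduce the event $\newf_V(x)-\newf_V(x^*)\le\eps$ to a statement purely about the inner product $\dotp{v_k}{x}$, and then apply the concentration bound of \Cref{prop:concentration} to the Haar-random vector $v_k$. This is exactly the argument used for \Cref{lem:guess}, transported to the combined function $\newf_V=\max\{\nem_V,\wall_V\}$, which is why the text claims the proof is identical.

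First I would collect the facts established in the paragraph immediately preceding the lemma. Taking $\tilde x=-\sum_{i\in[k]}v_i/\sqrt k$, the first property of \Cref{lem:aboutthewall} gives $\wall_V(\tilde x)\le-1/\sqrt k$, and clearly $\nem_V(\tilde x)\le-1/\sqrt k$, so $\newf_V(x^*)\le\newf_V(\tilde x)\le-1/\sqrt k=-10\eps$. Hence any $x$ with $\newf_V(x)-\newf_V(x^*)\le\eps$ satisfies $\newf_V(x)\le-9\eps$, and since $\newf_V\ge\nem_V$ pointwise this forces $\nem_V(x)\le-9\eps$. Because $\nem_V(x)\ge\dotp{v_k}{x}-k\gamma$ and $k\gamma\le\eps$ by \cref{eqn:wallkgamma}, we get $\dotp{v_k}{x}\le-9\eps+k\gamma\le-8\eps$. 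Thus for every fixed $x\in B(\vec 0,1)$,
$\{\,\newf_V(x)-\newf_V(x^*)\le\eps\,\}\subseteq\{\,\dotp{v_k}{x}\le-8\eps\,\}\subseteq\{\,|\dotp{v_k}{x}|\ge8\eps\,\}$.

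Next I would note that $v_k$ is a Haar-random unit vector in the $(n-k+1)$-dimensional space $\spn(v_1,\dots,v_{k-1})^{\perp}$, and that the orthogonal projection of $x$ onto this space has norm at most $\norm{x}\le1$. Applying \Cref{prop:concentration} in ambient dimension $n-k+1$ with $c=8\eps$ gives $\Pr_{v_k}(|\dotp{v_k}{x}|\ge8\eps)\le2e^{-(n-k+1)\cdot32\eps^2}$. Finally, substituting $\eps^2=1/(100k)$ and $n=ck^2\log k$ (so $n-k+1\ge n/2$ for large $k$), the exponent is $(n-k+1)\cdot32\eps^2=\Theta(k\log k)=\Omega(k)$, yielding the claimed bound $2e^{-\Omega(k)}$.

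I do not expect a genuine obstacle here. The only two points needing care are: (i) confirming the reduction to $\dotp{v_k}{x}$ still goes through for $\newf_V$ rather than just $\nem_V$ — it does, using $\newf_V\ge\nem_V$ together with item 1 of \Cref{lem:aboutthewall} for the minimum-value estimate; and (ii) observing that the tail here is only $\Omega(k)$ rather than the $\Omega(k^2)$ of \Cref{lem:guess}, precisely because the improved construction uses $n=\tO(k^2)$ instead of $\tO(k^3)$. This weaker bound is still more than sufficient for the subsequent hybrid argument, since it is summed over only $k-1$ queries.
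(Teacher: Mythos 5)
Your proposal is correct and follows the paper's argument essentially verbatim: the paper reduces $\eps$-optimality of $x$ to $\dotp{v_k}{x}\leq-8\eps$ in the paragraph preceding the lemma (via item 1 of \Cref{lem:aboutthewall}, $\newf_V\geq\nem_V$, and $k\gamma\leq\eps$) and then declares the proof identical to that of \Cref{lem:guess}, i.e., an application of \Cref{prop:concentration} to the Haar-random $v_k$ in the $(n-k+1)$-dimensional complement. Your observation that the smaller dimension $n=\tO(k^2)$ weakens the tail to $2e^{-\Omega(k)}$, which still suffices for the hybrid argument, is exactly right.
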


Finally, since the lemmas in the proof of the previous section's quantum lower bound (\Cref{subsec:quantumlowerbound}) used these two lemmas as a black box, the same proof allows us to argue that no algorithm can perform well if it makes at most $k-1$ queries to the oracle. This completes the proof of \Cref{thm:quantumlb}.  
%=============================================================================
\section{Open problems}
\label{sec:open}

We showed that in the black-box setting, no quantum algorithm can beat gradient descent in general, in the dimension-independent regime. Here are some interesting questions left open by our work:

\begin{enumerate}
  \item We showed in \Cref{thm:quantumub} that the class of functions used in the randomized lower bound can be solved faster with quantum queries. Is there a more interesting class of functions on which we can achieve a quantum speedup?
  \item Can the quantum lower bound in \Cref{sec:quantumlb} be made to work using the simpler class of functions $f_V(x) = \max_i \<v_i,x\>$, which is our function with $\gamma = 0$? If so, this might also decrease the dimension $n$ required.
  \item Can we establish tight quantum lower bounds in the parameter regime where dimension-dependent algorithms outperform gradient descent? When $1/\eps$ is a large polynomial in $n$, the complexity of gradient descent is also a large polynomial in $n$, but a dimension-dependent algorithm such as the center of gravity method~\cite{Bub15} yields an $O(n \log n)$ upper bound. Can we establish an $\tOmega(n)$ lower bound in this regime? The function used in the randomized lower bound in \Cref{sec:randomizedlb} yields an $\tOmega(\sqrt{n})$ lower bound and this is the best bound we are aware of. This is essentially the same as the problem left open by \cite{CCLW20,vAGGdW20}, but phrased in the language of membership and separation oracles.
  \item What can we say about other standard settings in convex optimization beyond first-order non-smooth convex optimization? Other natural settings include assuming the function is smooth, having the ability to query a prox oracle instead of a subgradient oracle, etc. Can quantum algorithms provide a speedup in the black-box model in these settings over the respective best classical algorithms in that setting?
\end{enumerate}

%=============================================================================
\section*{Acknowledgements}

We thank S\'{e}bastien Bubeck, Ronald de Wolf, and Andr{\'a}s Gily{\'{e}}n for helpful conversations about this work. 
RK thanks Vamsi Pritham Pingali for many helpful conversations about multivariable calculus.

%=============================================================================
\bibliographystyle{alphaurl}
\bibliography{references}

\end{document}